\theoremstyle{theorem}
\newtheorem{proposition}{Proposition}
\newtheorem{conjecture}{Conjecture}
\newtheorem{theorem}{Theorem}
\newtheorem{observation}{Observation}
\newcommand{\mean}[1]{\left\langle #1 \right\rangle}
\DeclareMathOperator{\per}{per}
\DeclareMathOperator{\sgn}{sgn}
\begin{document}

\title{Efficient computation of permanents, with applications to Boson
  sampling and random matrices}

\date{\today} 

\author{P. H. Lundow} 
\email{per.hakan.lundow@math.umu.se} 

\author{K. Markstr\"om}
\email{klas.markstrom@math.umu.se} 

\affiliation{Department of mathematics and mathematical statistics,
  Ume\aa{} University, SE-901 87 Ume\aa, Sweden}

\begin{abstract}
  In order to find the outcome probabilities of quantum mechanical
  systems like the optical networks underlying Boson sampling, it is
  necessary to be able to compute the permanents of unitary matrices,
  a computationally hard task.  Here we first discuss how to compute
  the permanent efficiently on a parallel computer, followed by
  algorithms which provide an exponential speed-up for sparse matrices
  and linear run times for matrices of limited bandwidth.  The
  parallel algorithm has been implemented in a freely available
  software package, also available in an efficient serial version.  As
  part of the timing runs for this package we set a new world record
  for the matrix order on which a permanent has been computed.

  Next we perform a simulation study of several conjectures regarding
  the distribution of the permanent for random matrices. Here we focus
  on permanent anti-concentration conjecture, which has been used to
  find the classical computational complexity of Boson sampling.  We
  find a good agreement with the basic versions of these conjectures,
  and based on our data we propose refined versions of some of them.
  For small systems we also find noticable deviations from a proposed
  strengthening of a bound for the number of photons in a Boson
  sampling system.
\end{abstract}

\keywords{Permanent, linear optics, boson sampling}

\maketitle

%-----------------------------------------------------------------------------
\section{Introduction}
One of the major tasks for experimental quantum physics today is to
implement and verify the performance of a universal quantum
computer. Under common complexity-theoretical assumptions such a
machine is expected to be able to solve certain problems far more
efficiently than any classical computer.  However, the experimental
task is formidable and recently an intermediate step known as Boson
Sampling \cite{aaronson:13} has become a focus for both theoretical
and experimental work.  The output probabilities of a Boson sampling
process are given by the \emph{permanent} of submatrices of the
unitary matrix describing the system.  The permanent of an $n\times
n$-matrix $A=(a_{i,j})$ is defined as
\begin{equation}\label{per}
  \per(A) = \sum_{\pi\in S_n}\prod_{i=1}^n a_{i,\pi(i)}
\end{equation}
where the sum is taken over all $n!$ permutations of
$N=\{1,2,\ldots,n\}$.  It differs from the determinant by ignoring the
sign of the permutation:
\begin{equation}\label{det}
  \det(A) = \sum_{\pi\in S_n}\sgn(\pi)\prod_{i=1}^n a_{i,\pi(i)}
\end{equation}
While the determinant $\det(A)$ can be computed in polynomial time,
computation of $\per(A)$ is in fact \#P-hard, even for
$01$-matrices~\cite{VAL}.  So, the expectation~\cite{aaronson:13} is
that Boson Sampling can be efficiently implemented as a physical
quantum system but computing the output probabilities will be hard for
a classical algorithm, while still not leading to a universal quantum
computer.

In order to verify that Boson sampling experiments give results which
agree with the theoretical prediction we need to compute output
probabilities for as large systems as possible.  This can be done
either exactly~\cite{Tian} or approximately through
simulation~\cite{Samp}.  Here the simulation algorithms still rely on
exact computation of some permanents as one step in the algorithm.
Our aim has been to implement an efficient, parallel and serial,
freely available, software package for computation of permanents,
which can either be used for direct computation of Boson sampling
output probabilities or to speed-up the existing sampling
programs. The resulting software package is freely
available~\cite{perm-code}.  On a parallel cluster a program based on
this package is able to compute the permanent of a $54\times 54$
matrix, where the previous record from \cite{Tian} was 48, in less
core time than the previous record.  Our package can be compiled for
both ordinary desktop machines and supercomputers.

The hardness argument in \cite{aaronson:13} is based on some
conjectures regarding the distribution of the permanent for certain
random matrices.  The behavior of random permanents has also become a
focus for some of the scepticism regarding quantum computing
\cite{2014arXiv1409.3093K}.  Using our package we have performed a
large-scale simulation study of the permanent distribution for several
different families of random matrices and the second part of our paper
is a discussion of this in relation to the conjectures from
\cite{aaronson:13}, as well the mathematical results from
\cite{MR2483225} and \cite{Grier2016NewHR}. In short, we find support
for some of the mentioned conjectures and propose some modifications
based on the sampling data.  We also find that one conjecture from
Ref.~\cite{aaronson:13}, which relates the number of Bosons to the
number of modes in a Boson sampling system, does not agree well with
data for small matrices.  This might change for larger sizes but
nonetheless means that eve more care must be taken in the analysis of
small Boson sampling experiments.

%-----------------------------------------------------------------------------

\section{Algorithms, the program library and its performance}
We will here discuss how to speed-up the computation of $\per(A)$,
both for general matrices and for matrices with some kind of
additional structure.  Following this we will look at the performance
of our implementation of these algorithms, both in terms of speed and
precision.

%------------------------------------------------------------------------------

\subsection{A parallel algorithm for permanents of general matrices}
As it is formulated in Eq.~\eqref{per} it would take $n\cdot n!$ steps
to compute the permanent. However, it was shown by
Ryser~\cite{ryser:63} that it can be formulated as
\begin{equation}\label{ryser}
  \per(A)=(-1)^n\,\sum_{J \subseteq N} (-1)^{|J|} \prod_{i=1}^n
  \sum_{j \in J} a_{i,j}
\end{equation}
where $N=\{1,2,\ldots,n\}$. This reduces the number of operations to
about $n^2 2^n$. A further improvement is to use Gray-code ordering of
the sets $J$ in Eq.~\eqref{ryser} as noted in
Ref.~\cite{wilf:78}. Finding the next set in this order takes on
average $2$ steps.  Ref.~\cite{wilf:78} also shows how to halve the
number of steps by only summing over subsets of
$\{1,2,\ldots,n-1\}$. The number of operations then is reduced to the
currently best $n 2^n$. This improvement on Eq.~\eqref{ryser} means
that the improved method can compute a permanent for $n=50$ slightly
faster than the method in Eq.~\eqref{ryser} can for $n=45$.

In some applications, like Boson sampling, matrices may have repeated
rows or columns. With this in mind let us note that repeated columns,
or rows, allows us to speed up the calculation of the permanent,
without changing the basic form of Ryser's formula.  Assume that the
distinct columns of $A$ are $\bar{c}_1,\ldots,\bar{c}_R$ and that $A$
has $m_i$ columns equal to $\bar{c}_i$.  Let $\Omega$ denote the set
of all vectors $(f_1, f_2, \ldots, f_R)$ such that $0\leq f_i\leq
m_i$. Then
\begin{multline}\label{ryser2}
  \per(A) =\\ (-1)^n \sum_{\Omega}(-1)^{\sum\limits_{t=1}^R f_t}
  \left(\prod_{j=1} ^{R} {m_j \choose f_j} \right ) \prod_{i=1} ^{n}
  \sum_{k = 1} ^{R} f_k \bar{c}_{i}(k)
\end{multline}
Using the weighted form of Ryser's formula automatically leads to a
speed-up when repeated columns are present.  A very rough upper bound
on the number of terms in this sum is $n^R$. So for $R$ small compared
to $n$ one gets a significant speed-up compared to general matrices,
and for constant $R$ the algorithm runs in polynomial time. Using a
slightly more involved expansion this can be extended \cite{Ba96} to
an algorithm with a running time of the form $\mathcal{O}(n^{cr})$,
where $r$ is the rank of the matrix and $c$ is some constant..

In the Appendix we give explicit algorithms for computing the
permanent on a parallel computer.  Here we will give a short
discussion of the performance, both with respect to running time and
precision, of our program for computing the permanent function. In
Appendix \ref{alg} we give a more detailed description of the
algorithm, and instructions on how to download the program package.
 
Our benchmarking, and our later numerical simulations, were performed
on the Kebnekaise cluster at HPC2N in Ume\aa{}, Sweden.  Each node on
this cluster has two 14-core Intel Exon processors running at $3.5$
GHz, and 128 GB RAM.

%------------------------------------------------------------------------------

\subsection{Precision}

We start by examining issues concerning precision.  The
implementations of different algorithms for computing the permanent in
\cite{Tian} achieved far worse precision for Ryser's algorithm than
for some other variations, leading to errors larger than 100\% for
$n\geq 32$. However, by a judicious choice of precision for single
numbers and summation method we find that Ryser's algorithm can be
implemented with both good precision and speed.

Note that the formulation in Eq.~\eqref{ryser} is essentially a very
long sum of products. The terms, of different signs, can vary greatly
in size. Some care must be taken to make sure that the resulting sum
is relevant. We have experimented with three approaches: doing all
computations with standard double precision, computing the product
with double precision (usually safe precision-wise) and using
quadruple precision for the sum, or, using only double precision and
Kahan summation~\cite{kahan:65} for the sum.

Using only double precision is of course the fastest but caution is
needed precision-wise if $n\gtrsim 30$. The double-quadruple precision
approach runs about half as fast but the precision is quite superior.
We see no significant difference in precision between Kahan summation
and the double-quadruple precision approach. Also, Kahan summation
only runs about 5\% slower than standard summation in double
precision.  Thus Kahan summation is highly recommended, especially if
quadruple precision is not available.  Some care must be taken so that
the compiler is not given too much freedom to alter the code
semantically during optimization. On the different compilers we tried
the default optimization setting did, however, not alter the code.
Note that partial sums from each core or node should be stored in an
array before the final Kahan summation, rather than just performing a
Reduce-operation. The double-quadruple precision approach, on the
other hand, allows for using the built-in Reduction-routines when
summing up the partial sums and the use of full optimization.

It is, of course, difficult to say in general what the true precision
of the result is after a permanent computation. Only for all-$1$
matrices, denoted $\mathbf{J}$ (recall $\per(\mathbf{J})=n!$), and
some $01$-matrices corresponding to biadjacency matrices of graphs, do
we have exactly known permanents. We suspect, however, that these are
worst-cases precision-wise, and that e.g., the random matrices used in
our later simulations are somewhat safer.  Comparing the results on
random Gaussian matrices using respectively double-precision and
double-quadruple precision shows much smaller errors than for
$\mathbf{J}$-matrices. The number of agreeing digits, for both cases,
is roughly $17-0.15n$ digits. For the range of $n$ studied here this
never puts the difference above $10^{-11}$ so computational error
seems not to be an issue in this study.

In Fig.~\ref{fig:error} we show the relative error in computing
$\per(\mathbf{J})$ using different computational scenarios. Using only
double precision is sensitive to how the computation is done. The
top-most set of points in the figure shows the error from computing
with double precision on a single core. The middle set of points show
the error when the computation (still only double precision) is run on
a single multi-core node ($20\le n \le 37$) and several multi-core
nodes ($38\le n \le 54$). It thus matters in which order the partial
sums are computed and added, this is here left to Reduction (OpenMP)
and AllReduce (MPI) which clearly do a good job. The bottom set of
points shows the error when the double-quadruple precision approach is
used. Here the order of summation now matters a lot less. The errors
for single-core ($15\le n\le 40$) and multi-core\&node ($20 \le n \le
53$) are here almost indistinguishable from each other so that
Reduction-AllReduce have no significant influence.

\begin{figure}
  \includegraphics[width=3.4in]{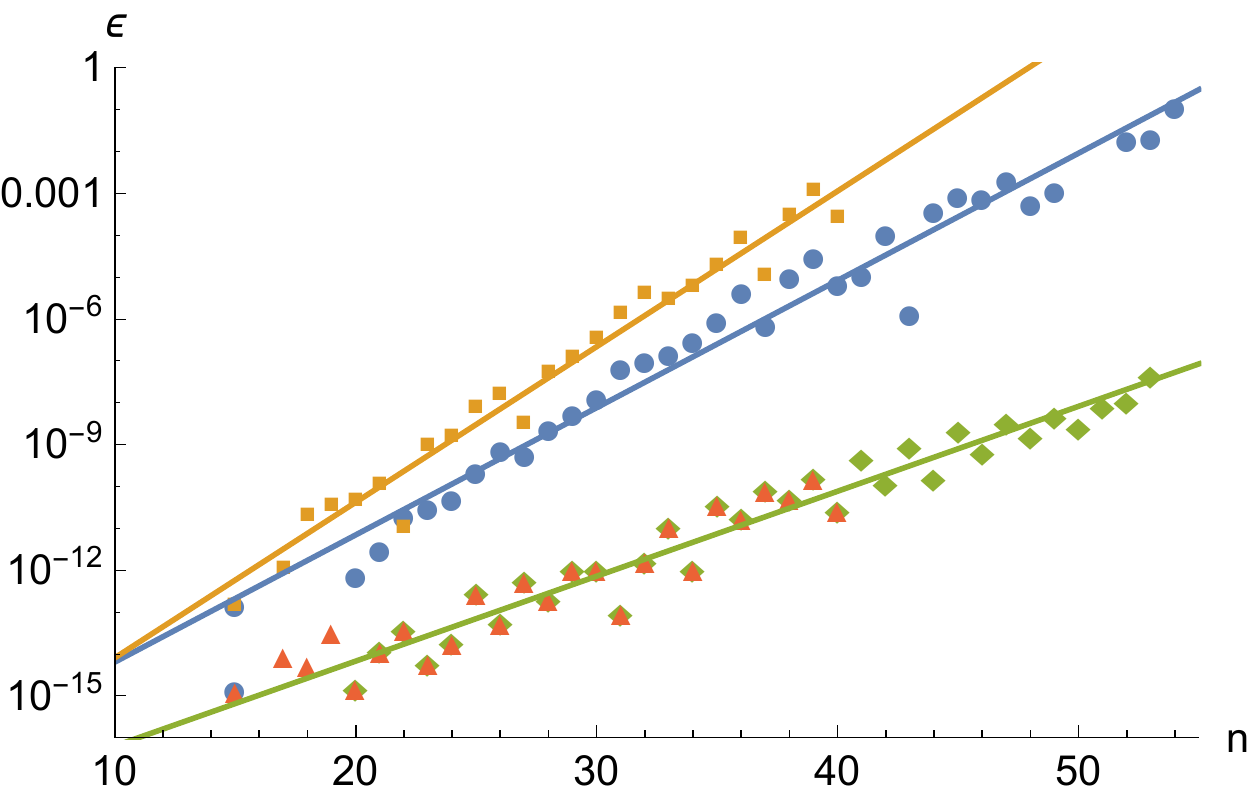}
  \caption{\label{fig:error} The relative error
    $\varepsilon=(\hat{p}-p)/p$ versus $n$ for double precision on
    single-core jobs (upper set of points), double precision on
    multi-core and multi-node jobs (middle set of points) and
    double-quadruple precision on single- and multi-core and
    multi-node jobs (lower set of points). Matrix is $\mathbf{J}$ (see
    text). From the fitted lines we obtain the decimal
    digit errors, respectively, $0.37n-17.8$, $0.30n-17.2$ and
    $0.20n-18$.}
\end{figure}

%------------------------------------------------------------------------------
\subsection{Running time}
In Fig.~\ref{fig:time} we show the total core time for computing the
permanent of an $n\times n$-matrix using the double-quadruple
precision approach. The lower set of points ($3\le n \le 40$) are for
single-core and the upper set of points ($20\le n\le 53$) are for
multi-core single- and multi-node. For the smaller $n$ we obviously
had to repeat the calculation many times to get an estimate of the
rather small running times. The almost constant run-time between
$20\le n\le 30$) is due to the overhead time (ca $5$ seconds) for
starting the program with MPI. There is of course an overhead time for
OpenMP multi-core as well, but this is much smaller. Several nodes are
only used from $n=38$, ranging from $2$ up to $400$ nodes for
$n=51,52,53$. The black line indicates a core-time of roughly
$10^{-9.2} n 2^n$ seconds.

\begin{figure}
  \includegraphics[width=3.4in]{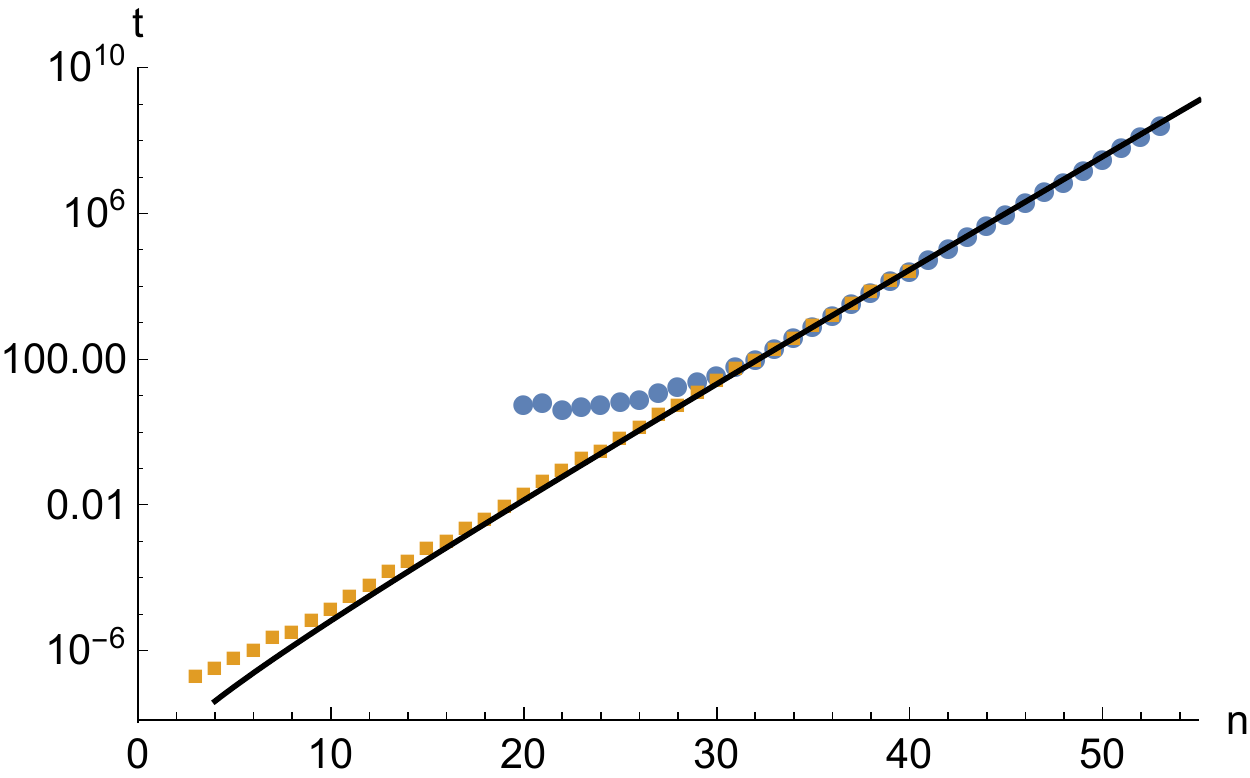}
  \caption{\label{fig:time} The total core time for a single-core job
    ($3\le n \le 40$, lower set of points, orange squares) and
    multi-core jobs on one or more nodes ($20\le n\le 53$, blue
    points) for a double-quadruple precision implementation of the
    permanent. For the multi-core jobs MPI was used for global
    communication between nodes, thus the overhead time of ca $5$
    seconds for $20\le n\le 30$. The line corresponds to an estimated
    run time of $10^{-9.2} n 2^n$ seconds.}
\end{figure}

The corresponding plot (not shown) for double-precision is very
similar but runs almost twice as fast.  The fluctuations between
different $n$ is however more pronounced but is not visible in a
log-plot. We should mention some effects from, we think, cache memory
sizes. For single-core the run time increases, as it should, by a
factor slightly larger than $2$. However, at $n=8, 16, 24, 32, 40$
this ratio is significantly less than $2$ between $n$ and $n-1$. For
example, at $n=8$ the ratio is $0.7$ so it is actually $30\%$ faster
to compute an $8\times 8$-permanent than a $7 \times 7$-permanent. At
$n=16, 24, 32, 40$ the ratio is respectively $1.0$, $1.2$, $1.4$ and
$1.6$.

%------------------------------------------------------------------------------
\subsection{Improved algorithms for Sparse and Structured Matrices}\label{sec:iassm}
The basic algorithms for computing the permanent can be modified to
handle matrices with many zero entries in a more efficient way, and we
will here give a brief discussion of this.  As discussed in
Ref.~\cite{brod} low-depth Boson sampling set-ups lead to sparse
matrices, and there it is also proven that under standard complexity
theoretic assumptions computing these permanents is exponentially hard
even with a constant, but sufficiently large, number of non-zero
elements in each row and column. In Ref.~\cite{brod} the author
suggests that algorithms for this kind of matrix would be of interest
and here we demonstrate that they allow for an exponential speed-up
over the general case.

The first interesting class is general sparse matrices, i.e., matrices
where a significant proportion of the entries are zero.  For matrices
of this type many of the products in Ryser's method \eqref{ryser} will
be zero, and thus not necessary to compute.  If we interpret the
matrix $A$ as the adjacency matrix of an edge-weighted graph $G$ we
find that a set $J$ can only lead to a non-zero product if $J$ is a
dominating set in $G$, i.e., every vertex in $G$ has at least one
neighbor in $J$.  

If we restrict Eq.~\eqref{ryser} to sets $J$ which are dominating sets
we still have an exponential time algorithm, but running in time
poly$(n)a^n$, where $a$ can be noticeably smaller than 2.  Listing the
minimal dominating sets of $G$ is itself an exponentially hard
problem, which can be solved in time $\mathcal{O}(1.7159^n)$
\cite{Fomin}. However, for sufficiently sparse graphs we can instead
use some simple heuristics which lead us to include both all
dominating sets and some non-dominating ones, and still get a
significant speed-up compared to the basic version of Ryser's method.
In the sparse version of our code we have implemented this by
greedily, according to vertex degree, picking a set of vertices $I$
with disjoint neighborhoods and then listing all subsets $J$ which
contains at least one neighbor for every vertex in $I$.

For matrices leading to a sparse graph $G$ we can also prove that the
number of dominating sets is exponentially smaller than $2^n$, leading
to an exponential speed-up over the basic version of Ryser's formula.
We say that a matrix is \emph{$d$-sparse} if each row and column
contains at most $d$ non-zero entries.
\begin{theorem}\label{thm1}
	Let $A$ be a $d$-sparse $n \times n$ matrix. Then the permanent of
    $A$ can be computed in time
	\[\mathcal{O}(n2^n (1-2^{-d})^{n/d^2})\]
\end{theorem}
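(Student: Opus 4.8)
The plan is to reduce the running time to the number of nonzero terms in Ryser's formula \eqref{ryser} and then bound that number. As already observed in the preceding discussion, the inner product $\prod_{i=1}^n \sum_{j\in J} a_{i,j}$ vanishes unless $J$ meets the support $S_i=\{j : a_{i,j}\neq 0\}$ of every row $i$, that is, unless $J$ is a dominating set of the associated graph $G$. (If some row is identically zero then $\per(A)=0$ and we return immediately, so we may assume every $S_i$ is nonempty.) Restricting the sum to dominating sets therefore still computes $\per(A)$, and if we can enumerate those sets while spending $\mathcal{O}(n)$ arithmetic per set — updating the $n$ row-sums incrementally and recomputing their product as we move from one set to the next — the total cost is $\mathcal{O}(n\,D)$, where $D$ is the number of dominating sets. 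The whole theorem then rests on proving $D\le 2^n(1-2^{-d})^{n/d^2}$.

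To bound $D$ I would use a simple counting argument phrased probabilistically. Draw $J\subseteq N$ uniformly at random, so that each index lies in $J$ independently with probability $1/2$ and $D=2^n\Pr[J\text{ dominating}]$. For a single row, $\Pr[J\cap S_i=\emptyset]=2^{-|S_i|}\ge 2^{-d}$ since $|S_i|\le d$, so the probability that row $i$ \emph{is} dominated is $1-2^{-|S_i|}\le 1-2^{-d}$. If the supports of a family of rows are pairwise disjoint, the corresponding domination events depend on disjoint blocks of coordinates of $J$ and hence are independent, which lets these per-row bounds multiply.

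The key combinatorial step is to extract a large family $I$ of rows with pairwise disjoint supports. Form the \emph{conflict graph} $H$ on the row indices, joining $i$ and $i'$ whenever $S_i\cap S_{i'}\neq\emptyset$; here I use column-sparsity, since each column $j\in S_i$ lies in at most $d$ rows, so row $i$ conflicts with at most $|S_i|(d-1)\le d(d-1)<d^2$ other rows. Thus $H$ has maximum degree below $d^2$, and a greedy selection (repeatedly take a row and delete it together with its $\le d^2-1$ neighbours) yields an independent set $I$ with $|I|\ge n/d^2$, i.e.\ at least $n/d^2$ rows with disjoint supports. Combining with the previous paragraph,
\[
\Pr[J\text{ dominating}]\le \prod_{i\in I}\bigl(1-2^{-|S_i|}\bigr)\le (1-2^{-d})^{|I|}\le (1-2^{-d})^{n/d^2},
\]
whence $D\le 2^n(1-2^{-d})^{n/d^2}$, and multiplying by the $\mathcal{O}(n)$ per-set cost gives the claimed time.

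I expect the delicate point to be the enumeration rather than the counting: naively iterating over all $2^n$ subsets to test domination would already cost $2^n$ and defeat the speed-up, so one must generate only the dominating sets (or a modest superset of them) with no more than polynomial overhead per set, for instance via a branching recursion that commits rows to being dominated as it descends. Verifying that such an enumeration stays within $\mathcal{O}(n\,D)$ — and that the greedy heuristic actually used in the code produces a family of sets still governed by the same bound — is the step I would treat most carefully. The counting inequality itself is robust; the only places needing attention are the two separate uses of $d$-sparsity (row supports for the domination probability, column supports for the conflict-degree bound) and the harmless assumption that no row is entirely zero.
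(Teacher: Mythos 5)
Your argument is in substance the paper's: the paper also extracts $n/d^2$ rows whose neighbourhoods are pairwise disjoint (it does this by properly colouring the square graph $G^2$, whose maximum degree is at most $d^2$, and taking a largest colour class, where you use a greedy independent set in the same conflict graph), and it also obtains the factor $(1-2^{-d})^{n/d^2}$ from one factor $1-2^{-d}$ per selected row. The one point worth flagging is that the ``delicate point'' you defer at the end is exactly where the paper's proof does something slightly different from what you propose, and it makes your worry evaporate: the algorithm does \emph{not} enumerate the dominating sets (whose number $D$ you bound), nor does it need a branching recursion. Instead it enumerates the explicitly structured superset
\[
\{\,J : J\cap S_i\neq\emptyset \text{ for all } i\in I\,\},
\]
which has a product structure --- choose a non-empty subset of each of the disjoint sets $S_i$, $i\in I$, and an arbitrary subset of the remaining $n-\sum_{i\in I}|S_i|$ indices --- and therefore can be listed directly with Gray-code-style $\mathcal{O}(n)$ updates per member. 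Its cardinality is $\prod_{i\in I}(2^{|S_i|}-1)\cdot 2^{\,n-\sum_{i\in I}|S_i|}=2^n\prod_{i\in I}(1-2^{-|S_i|})\le 2^n(1-2^{-d})^{n/d^2}$, which is your own computation applied to the superset rather than to the dominating sets. So the quantity that actually needs bounding is the size of this enumerable superset, not $D$; with that substitution your proof closes and coincides with the paper's.
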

We will return to this theorem and give a proof in
Appendix~\ref{alg2}.

In Fig.~\ref{fig:sparse} we show how the running time increases with
$n$ in our Fortran implementation. Clearly this allows for computation
of the permanent of surprisingly large matrices.  However, with this
implementation the improvement quickly vanishes with densities above
$5$\%, a more ambitious algorithm would surely improve considerably on
this.

\begin{figure}
  \includegraphics[width=3.4in]{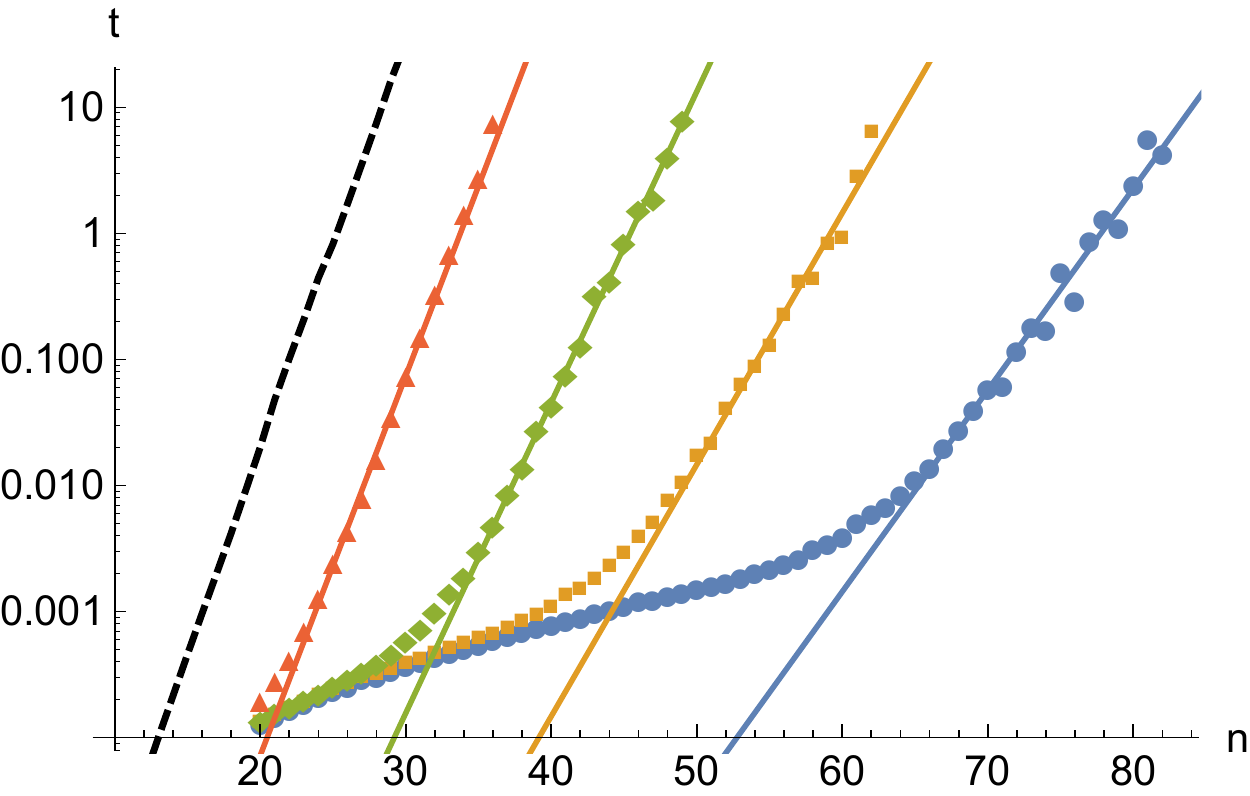}
  \caption{\label{fig:sparse} The median run time in seconds for the
    sparse matrix algorithm for real matrices. Matrices have $1$ on
    the diagonal (identity matrix) to guarantee a positive
    permanent. The off-diagonal entries are $1$ with probability $p$
    and $0$ with probability $1-p$. From left to right: standard
    algorithm for dense matrices (black dashed curve), $p=0.05$
    (orange triangles), $p=0.02$ (green diamonds), $p=0.01$ (yellow
    squares) and $p=0.005$ (blue points). The lines have slope
    $\ln(2)\approx 0.693$, $0.693$, $0.567$, $0.459$ and $0.368$ (left
    to right).}
\end{figure}

The associated graph $G$ can also be used to describe the second class
of matrices for which we have improved algorithms, namely those for
which $G$ has bounded \emph{tree-width}.  This direction is a
classical topic \cite{COURCELLE200123} and it is well known that for
matrices with tree-width bounded by some number $k$ the permanent can
be computed in time which is exponential in $k$ but polynomial in $n$.
After the first such algorithms a number of improvements have been
made \cite{Flarup2007,vanRooij2009,Meer2011,CIFUENTES201645}, but we
are not aware of any efficient implementations of the general bounded
tree-width methods.  However, one interesting subclass of this family
is the set of matrices of limited bandwidth $k$, i.e., matrices $A$
where $a_{i,j}=0$ whenever $|i-j|> k$, for which we can give a
practically useful algorithm.  This class of matrices is of particular
interest in connection with Boson sampling, since they include the
unitary matrices for Boson systems with certain restrictions.

For matrices of bandwidth $k$ we can interpret the product in
Eq.~\eqref{per} as a directed walk on the associated graph $G$ where
every vertex has out-degree one and in-degree one, and when the
bandwidth is limited to $k$ every vertex is connected to only vertices
which differ by at most $k$ from its own index $i$.  This means that
the sums over these weighted paths can be done using a transfer
matrix, following the general set up described in
Refs.~\cite{L2001,LM1,LM2}, leading to an algorithm which runs in time
$\mathcal{O}(n^2 2^k)$, and can in fact be reduced to linear time,
in $n$, for fixed $k$.  We describe a linear time version of this
method in Appendix~\ref{alg3}.

We have implemented the band-limited method in Mathematica and in
Fig.~\ref{fig:bw} we display the run times for $n\leq 100$ and
$k=1,2,3,4,5$ for random real gaussian matrices.  Our Mathematica code
is available online and this method will be included in an updated
Fortran package as well.

\begin{figure}
  \includegraphics[width=3.4in]{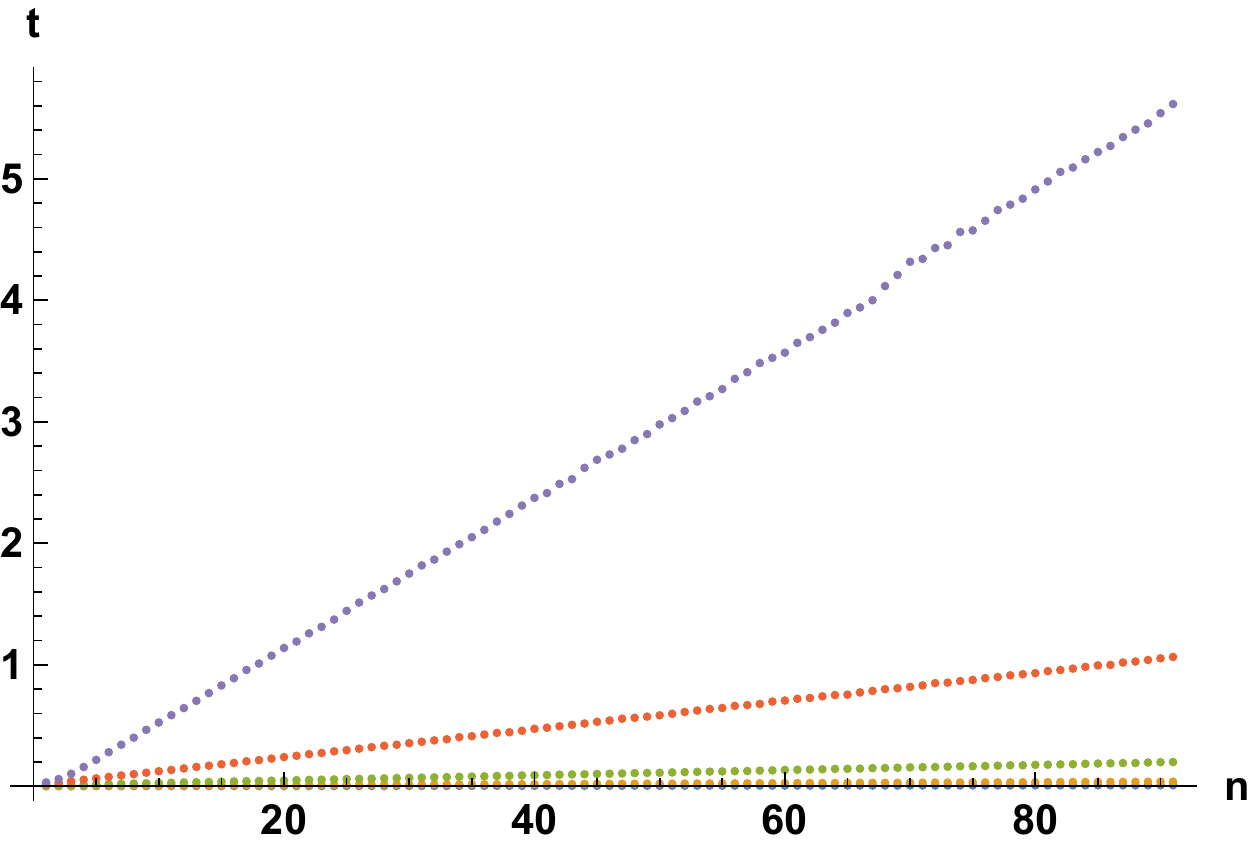}
  \caption{\label{fig:bw} The mean run time in seconds for the
    bandwith limited algorithm, for matrices with real Gaussian
    entries for bandwidth, $k=1,\ldots, 5$ (upwards in figure), and
    $n\leq 100$. }
\end{figure}

For matrices of logarithmic bandwidth this algorithm retains a
polynomial run time, but with a degree which depends on the scaling of
the logarithm. From the above we then get the following proposition.

\begin{proposition}
  For $n\times n$ matrices of bandwidth $k \leq c \log n$ the
  permanent can be computed in time $O(n^{2+c})$.
\end{proposition}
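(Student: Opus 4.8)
The plan is to obtain the proposition as an immediate corollary of the transfer-matrix algorithm described just above, whose running time on a matrix of bandwidth $k$ is $\mathcal{O}(n^2 2^k)$. The only real work is to substitute the hypothesis $k \le c\log n$ into this bound and simplify, so the proof will be short.

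First I would invoke the bandwidth-$k$ algorithm together with its stated complexity $\mathcal{O}(n^2 2^k)$, being careful to use this general form rather than the linear-in-$n$ refinement: the latter is asserted only for \emph{fixed} $k$, whereas here $k$ grows with $n$, so it is the $\mathcal{O}(n^2 2^k)$ estimate that is relevant. The key observation to record is that this estimate is uniform in $k$, in the sense that the only $k$-dependence is the factor $2^k$, which is exactly the dimension of the transfer matrix; the polynomial prefactor $n^2$ (collecting the $O(n)$ transfer steps and the cost of the arithmetic on the partial sums) does not depend on $k$. Hence nothing in the bound breaks when $k$ is allowed to depend on $n$.

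Next I would substitute. Reading $\log$ as $\log_2$, the hypothesis gives $2^{k} \le 2^{c\log_2 n} = n^{c}$, so that
\[
  n^{2}\,2^{k} \;\le\; n^{2}\cdot n^{c} \;=\; n^{2+c},
\]
and therefore the permanent is computed in time $\mathcal{O}(n^{2+c})$, which is the claim.

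The one point deserving care, and the only place where the argument is more than mechanical, is the base of the logarithm. If $\log$ denotes the natural logarithm then $2^{c\ln n} = n^{c\ln 2}$, and the exponent becomes $2 + c\ln 2$ rather than $2+c$; more generally a change of base rescales $c$ by a constant factor. I would therefore either fix the convention $\log = \log_2$ at the outset, or simply remark that the constant $c$ in the exponent absorbs any change of base, so that the stated form $O(n^{2+c})$ holds with the base-$2$ reading of the bandwidth condition. Beyond this bookkeeping there is no genuine obstacle.
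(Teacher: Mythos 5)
Your argument is exactly the paper's: the proposition is stated as an immediate consequence of the $\mathcal{O}(n^2 2^k)$ transfer-matrix bound, obtained by substituting $k \le c\log n$ so that $2^k \le n^c$. Your additional remark about the base of the logarithm (and that a change of base merely rescales $c$) is a correct and worthwhile clarification that the paper leaves implicit.
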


For a 2-dimensional Boson sampling system where each beam splitter
only interacts with its nearest neighbours, or more generally those
within some fixed distance $r$, the related unitary matrix will have
limited bandwidth. Here the bandwidth depends linearly on both $r$ and
the depth of the optical network.  As discussed in, e.g.,
Refs.~\cite{jozsa06,CC} these systems can be simulated in
sub-exponential time.  As long as the input and output states of such
a network do not have several photons in a given mode, so called
collision free states, the relevant matrix will be of limited
bandwidth and our linear time algorithm can be applied.  If there are
collisions in the output the matrix will have repeated columns and the
matrix will still have limited band width as long as the number of
collisions is bounded, but the number of collisions now add to the
bandwidth.  However, note also that if we only have repeated
columns our transfer matrix based algorithm can be modified to handle
this generalized version of band-limited matrices. Here the "width"
will then instead correspond to the maximum number of non-zero
elements in a column.

\begin{observation}
	Using the algorithm for band-limited matrices as a subroutine the
    algorithm from Ref.~\cite{Samp} simulates band-limited Boson
    sampling systems in polynomial time as long as the number of
    collisions in the output state is bounded.
\end{observation}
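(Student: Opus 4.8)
The plan is to bound the total running time of the sampling procedure of Ref.~\cite{Samp} as (the number of permanent evaluations it performs) times (the cost of each evaluation), and to argue that both factors are polynomial once the underlying network is band-limited and the number of output collisions is bounded. I would begin by recalling the structure of that algorithm: to draw a single output configuration it constructs the sample one photon at a time, and at each of the $O(\mathrm{poly}(n,m))$ conditioning steps it evaluates the permanent of a submatrix $U_{S,T}$ of the system unitary $U$, taken on a row set $T$ fixed by the input and a column set $S$ recording the partially built output. Because the number of such steps, and the surrounding bookkeeping, is already polynomial in the number of photons $n$ and modes $m$, it suffices to show that every permanent the routine requests can be evaluated in polynomial time.

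For this I would invoke the structural facts assembled earlier in this section. When the network is band-limited — each beam splitter coupling only modes within a fixed distance $r$ at bounded depth — $U$ has bounded bandwidth $k$, so that $U_{i,j}$ vanishes whenever $|i-j|>k$. The crucial step is to check that each requested submatrix $U_{S,T}$ again has bounded bandwidth once its chosen rows and columns are relabelled in increasing order. Writing the sorted input and output indices as $t_1<t_2<\cdots$ and $s_1<s_2<\cdots$, the $(a,b)$ entry of the submatrix is $U_{t_a,s_b}$, which is nonzero only when $|t_a-s_b|\le k$; the submatrix bandwidth is therefore controlled by the discrepancy $|r_T(x)-r_S(x)|$ between the numbers of input and output modes lying below a threshold $x$. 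I would bound this discrepancy by the count of photons inside a window of width $2k$ about $x$, and here collision-freeness is decisive: with at most one photon per mode such a window contains at most $2k$ photons, giving a submatrix bandwidth $k'=O(k)$. A bounded number $c$ of output collisions contributes at most $c$ repeated columns and raises this window count by at most $c$, so $k'=O(k+c)$ stays bounded. Passing $U_{S,T}$ to the generalized transfer-matrix algorithm of Appendix~\ref{alg3} — whose width is the maximum number of non-zero entries in a column and so absorbs the repeated columns directly — then costs $\mathcal{O}(n^2 2^{k'})$, polynomial in $n$. Assembling the two estimates yields a polynomial number of permanent evaluations, each completed in polynomial time, together with the polynomial overhead of the sampler, and hence a total running time polynomial in $n$ and $m$, which is the assertion of the observation.

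The step I expect to be the main obstacle is precisely this bandwidth-preservation claim for the intermediate submatrices. The band condition on $U$ is expressed in the original mode labels, whereas the permanent routine sees a matrix indexed by the \emph{ranks} of the selected modes, and the two indexings can separate: for an output that the photons cannot physically reach, the naively formed submatrix may carry nonzero entries arbitrarily far from the diagonal. I would resolve this by observing that any such configuration forces a zero row or column in $U_{S,T}$, whence $\per(U_{S,T})=0$ can be certified in linear time, so the transfer-matrix cost is only incurred on the reachable configurations where the discrepancy is genuinely bounded. Making this dichotomy rigorous — a bounded discrepancy, hence bounded bandwidth, on reachable outputs against a cheaply detected vanishing permanent otherwise — and verifying that every conditioning submatrix generated by Ref.~\cite{Samp} falls into this framework, is where the substantive work lies.
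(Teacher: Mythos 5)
Your proposal follows essentially the same route as the paper's justification: the sampler of Ref.~\cite{Samp} makes polynomially many permanent calls, each on a band-limited matrix whose effective bandwidth grows only by the bounded number of collisions, so the transfer-matrix algorithm of Appendix~\ref{alg3} (with repeated columns absorbed into the generalized notion of width) makes each call polynomial. Your rank-discrepancy analysis of the conditioning submatrices, with the dichotomy between bounded-bandwidth reachable configurations and cheaply detected zero permanents, is a correct and welcome elaboration of a point the paper leaves implicit.
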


For a Haar random unitary the caveat on bounded collisions is not
required, due to the Bosonic birthday paradox \cite{AK12,bunch}, which
guarantees that collisions will be unlikely if the number of Bosons is
not too large.  For low-depth systems this theorem does not apply, but
due to the bounded range of the interactions we still do not expect
large numbers of collisions in a single output mode for a random input
state. Providing an exact quantitative form of this statement,
converging to the Bosonic birthday paradox as depth increases, would
be of interest.

%-----------------------------------------------------------------------------

\section{Complex Gaussian matrices}
Let $A=(a_{i,j})$ be an $n \times n$-matrix of i.i.d.~complex Gaussian
numbers, i.e., $A$ is a member of the Ginibre ensemble
$\mathcal{G}(n)$. Each element $a_{i,j}$ can be generated by first
producing two independent real Gaussian numbers $g_1$ and $g_2$, using
the Box-Muller method~\cite{box:58}, and then $(g_1 + \imath
g_2)/\sqrt{2}$ is a random complex Gaussian with mean 0 and variance
$1$. Clearly, the expected value of the permanent of a matrix with
such entries is $0$, due to symmetry.  However, we are mainly
interested in the properties of the modulus $|\per(A)|$.  It is
known~\cite{aaronson:13} that $\mean{|\per(A)|^2} = n!$, and this is
true as long as the entries are independent real or complex numbers
with variance 1, so that it is more natural to work with the random
variable $X=|\per(A)|/\sqrt{n!}$ and thus have $\mean{X^2}=1$.

Remarkably, it is also shown in Ref.~\cite{aaronson:13} that
$\mean{X^4}=n+1$. In general, the authors of Ref.~\cite{aaronson:13}
can express the even moments exactly in terms of the expected number
of decompositions of a $k$-regular multigraph into disjoint perfect
matchings. An asymptotic result, following from the proof of van der
Waerden's conjecture on the permanent of doubly-stochastic matrices,
is then that $\mean{X^{2k}}\sim (k/e)^n$ for $k\ge 3$. 

We will denote the distribution function by $F(x,n) = \Pr(X\le x)$
where $X$ are samples from $n\times n$-matrices. The density function
is denoted $f(x,n) = F'(x,n)$. Also let $f(x)=f(x,\infty)$ and
$F(x)=F(x,\infty)$ but we also use $f(x)$ and $F(x)$ as generic forms
when the context is clear. The $x$-values are chosen in a geometric
progression of $16$ per decade (an interval of the form $\lbrack
10^{k-1},10^k)$ for some integer $k$). The density $f(x,n)$ is then
obtained from the difference quotient
$f(x_i)=(F(x_{i+1})-F(x_{i-1}))/(x_{i+1}-x_{i-1})$. The error in
$F(x)$ is estimated as $\epsilon(x)=\sqrt{F(x)\,(1-F(x)}/\sqrt{m}$
where $m$ is the number of samples. The error in $f(x)$ can then be
obtained in the usual fashion.

%------------------------------------------------------------------------------
\subsection{Measuring up the distribution}
For each size $n$ we have computed the permanent of $10^6$ random
complex Gaussian $n\times n$-matrices, for $n=1,2,\ldots,30$, which we
will use in this section, and $10^5$ matrices for $n=31,\ldots,35$, to
be used later. Let us take a look at the moments $\mean{X^k}$ of the
sampled data. In Fig.~\ref{fig:CGmom1} we show the mean $\mean{X}$ vs
$1/n$. A fitted line suggests the limit mean value $0.684(1)$ where
the error estimate is obtained from fitting on the points $n\ge k$
with $k=6,\ldots,12$. The error bars in the figure were estimated from
bootstrap resampling. For the second moment, shown in
Fig.~\ref{fig:CGmom2}, we know that it should be $1$ and indeed there
is only noise-like deviation from the line $y=1$.

\begin{figure}
  \includegraphics[width=3.4in]{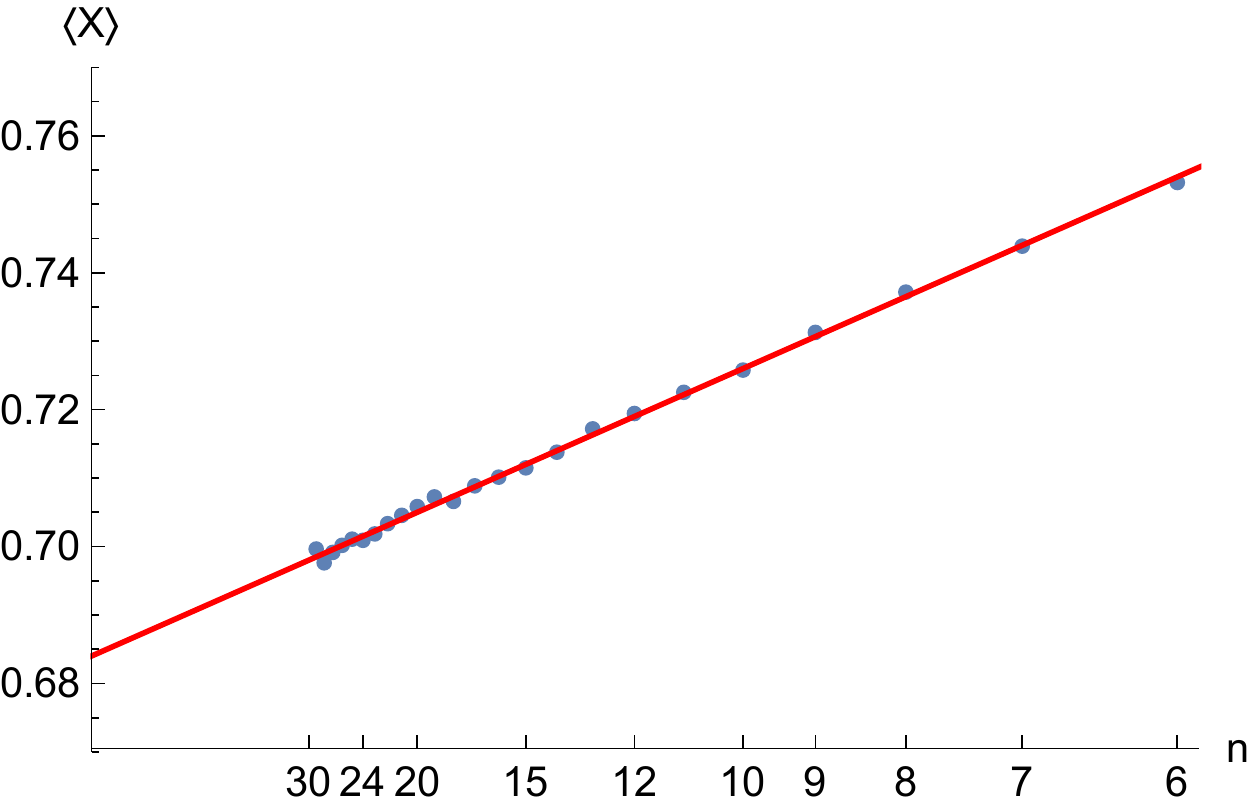}
  \caption{\label{fig:CGmom1} Gaussian entries. Mean value $\mean{X}$
    versus $1/n$ for $n=6,7,\ldots,30$ and the fitted line $y=0.684 +
    0.42x$ (red) where $x=1/n$.}
\end{figure}

\begin{figure}
  \includegraphics[width=3.4in]{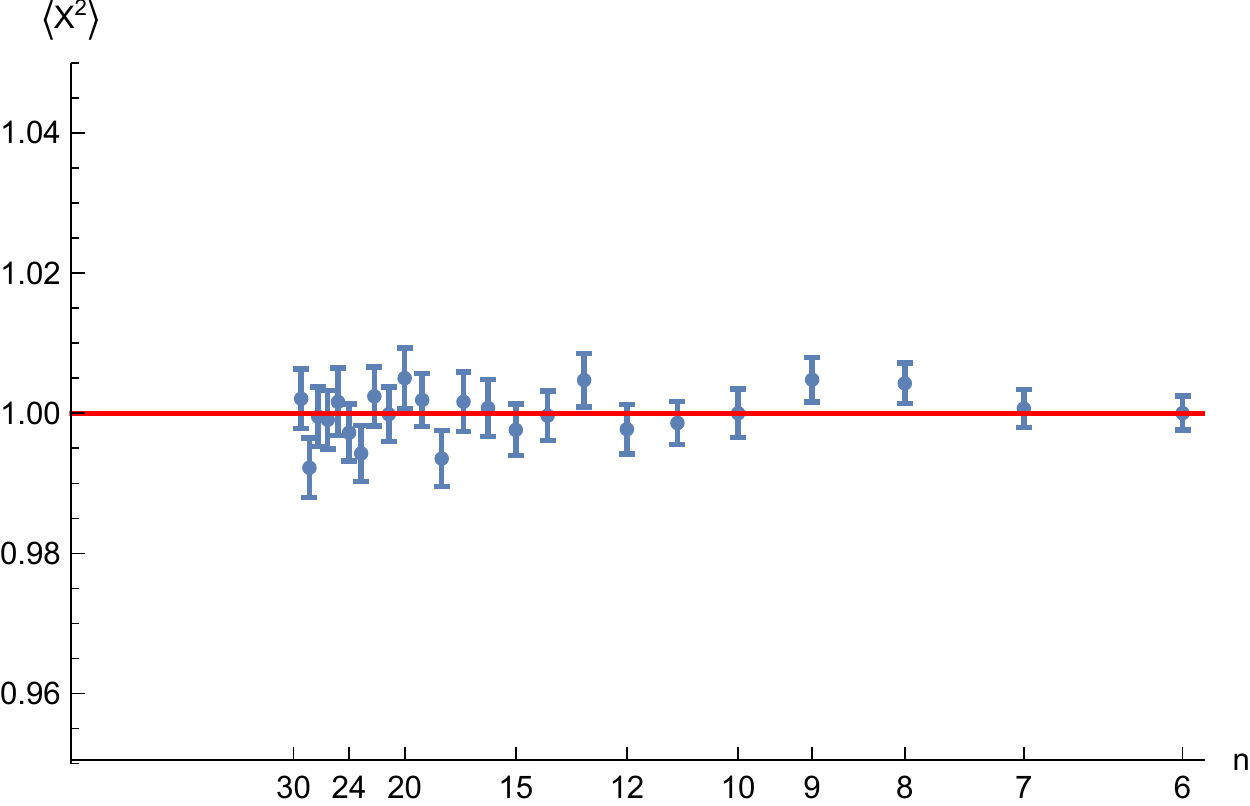}
  \caption{\label{fig:CGmom2} Gaussian entries. Mean value
    $\mean{X^2}$ versus $1/n$ for $n=6,7,\ldots,30$ and the line
    $y=1$.}
\end{figure}

In Fig.~\ref{fig:CGmom3} the third moment is shown and the noise is
still manageable. Using the line fitting procedure described for the
first moment, suggests the limit $3.20(3)$.  In Fig.~\ref{fig:CGmom4}
we show the fourth moment $\mean{X^4}$, which should be $n+1$, plotted
against $n$. A line $y=1+x$ together with the data points indicate
that for larger $n$ the data often favor a smaller moment. However,
the error bars occasionally get quite pronounced. This indicates that
our data set, from lack of samples, has not yet fully explored the
rather heavy tails of the distribution.

\begin{figure}
  \includegraphics[width=3.4in]{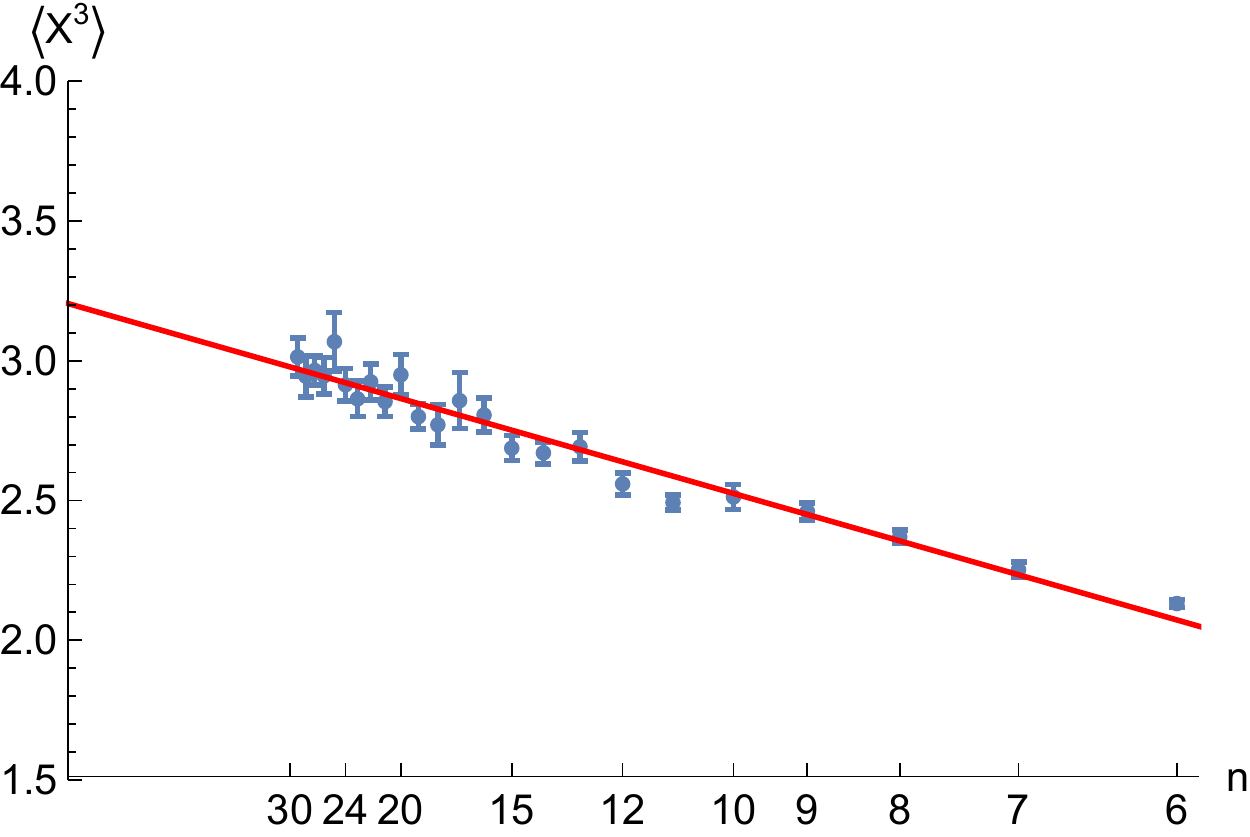}
  \caption{\label{fig:CGmom3} Gaussian entries. Mean value
    $\mean{X^3}$ versus $1/n$ for $n=6,7,\ldots,30$ and the line
    $3.20-6.8x$ (red) where $x=1/n$.}
\end{figure}

\begin{figure}
  \includegraphics[width=3.4in]{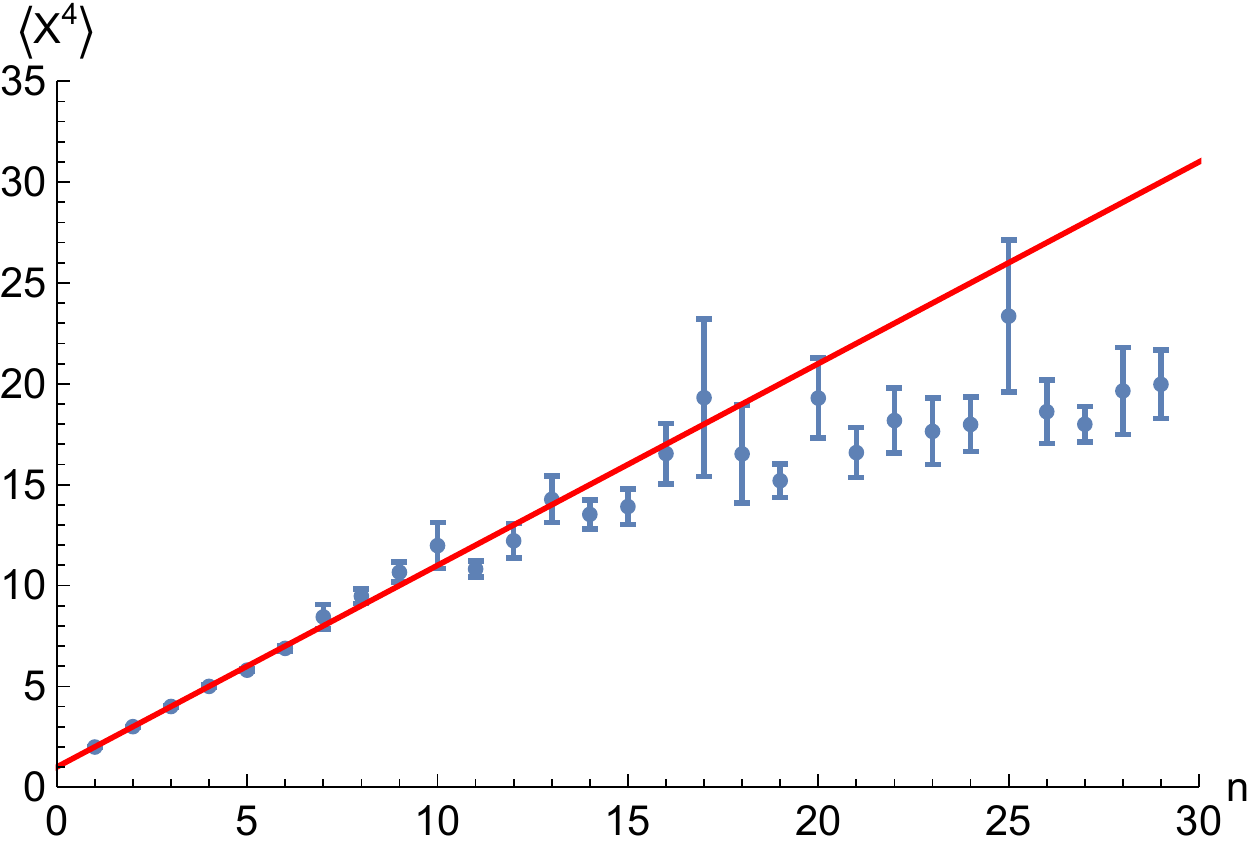}
  \caption{\label{fig:CGmom4} Gaussian entries. Mean value
    $\mean{X^4}$ versus $n$ for $n=1,2,\ldots,30$ and the line
    $y=1+n$ (red), i.e.~the correct expected value.}
\end{figure}

Finally, in Fig.~\ref{fig:CGdist1-n30} we show the distribution
density for $n=30$. This brings us to an interesting conjecture; the
permanent anti-concentration conjecture, see
Ref.~\cite{aaronson:13}. It states that there exists a polynomial $p$
such that
\begin{equation}
  F(1/p(n,1/\delta)) < \delta
\end{equation}
for all $n$ and $\delta>0$. Equivalently, this can be formulated as
the existence of constants $a$, $b$, $c$ such that
\begin{equation}
  F(\varepsilon) < c n^a \varepsilon^b
\end{equation}
for all $n$ and $\varepsilon>0$. In the next section we will provide
numerical support for this conjecture.

\begin{figure}
  \includegraphics[width=3.4in]{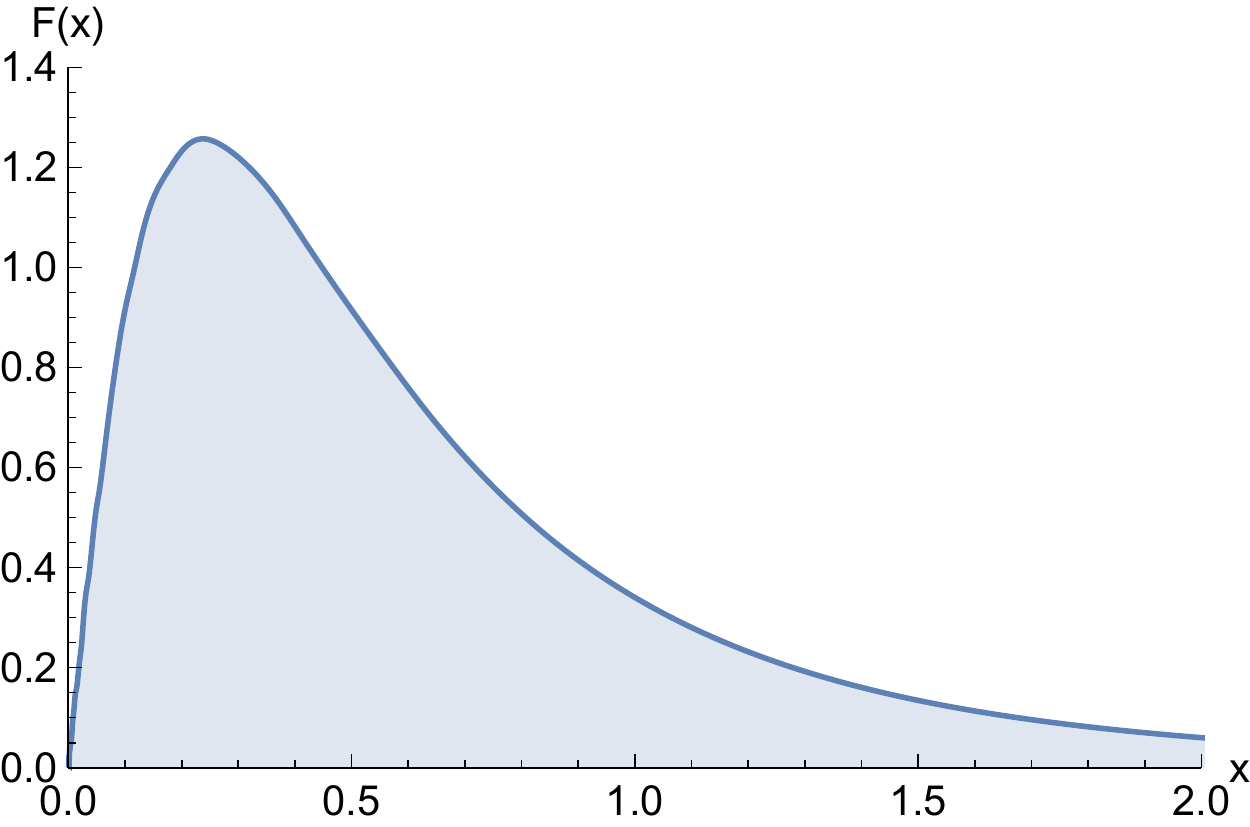}
  \caption{\label{fig:CGdist1-n30} Gaussian entries. Density function
    $f(x)$ versus $x$ for $n=30$.  Data from $10^6$ random $30\times
    30$-matrices $A$. This range ($x<2$) contains $95$\% of the
    samples.}
\end{figure}

%------------------------------------------------------------------------------
\subsection{The distribution in detail}
Let us try to discern how $F(x)$ behaves for small $x$.  In
Fig.~\ref{fig:CGFxvn} we show $F(x,n)$ for $n=6,\ldots,30$ for a few
values of $x$ together with fitted lines. It turns out that $F(x,n)$
depends quite linearly on $1/n$ for $n\ge 6$. When smaller $n$ are
included a higher-order correction term becomes necessary. Note that
$F(x,n)$ is strictly increasing with $n$ for $x\lesssim 1.77$. For
each fixed $x$ we fit a line through the points $(1/n,F(x,n))$ and its
constant term then gives us an estimated asymptotic value, i.e.~we
make the ansatz $F(x,n)=F(x)+C(x)/n$ so that the slope $C(x)$ only
depends on $x$.  By deleting individual points from the line fitting
we obtain error estimates of the parameters of each line.  At
$x\lesssim 0.005$ the probabilities become less than $10^{-4}$ and the
error bar for each estimate of $F(x)$ becomes significant. We have
thus only used $x\ge 0.005$.

\begin{figure}
  \includegraphics[width=3.4in]{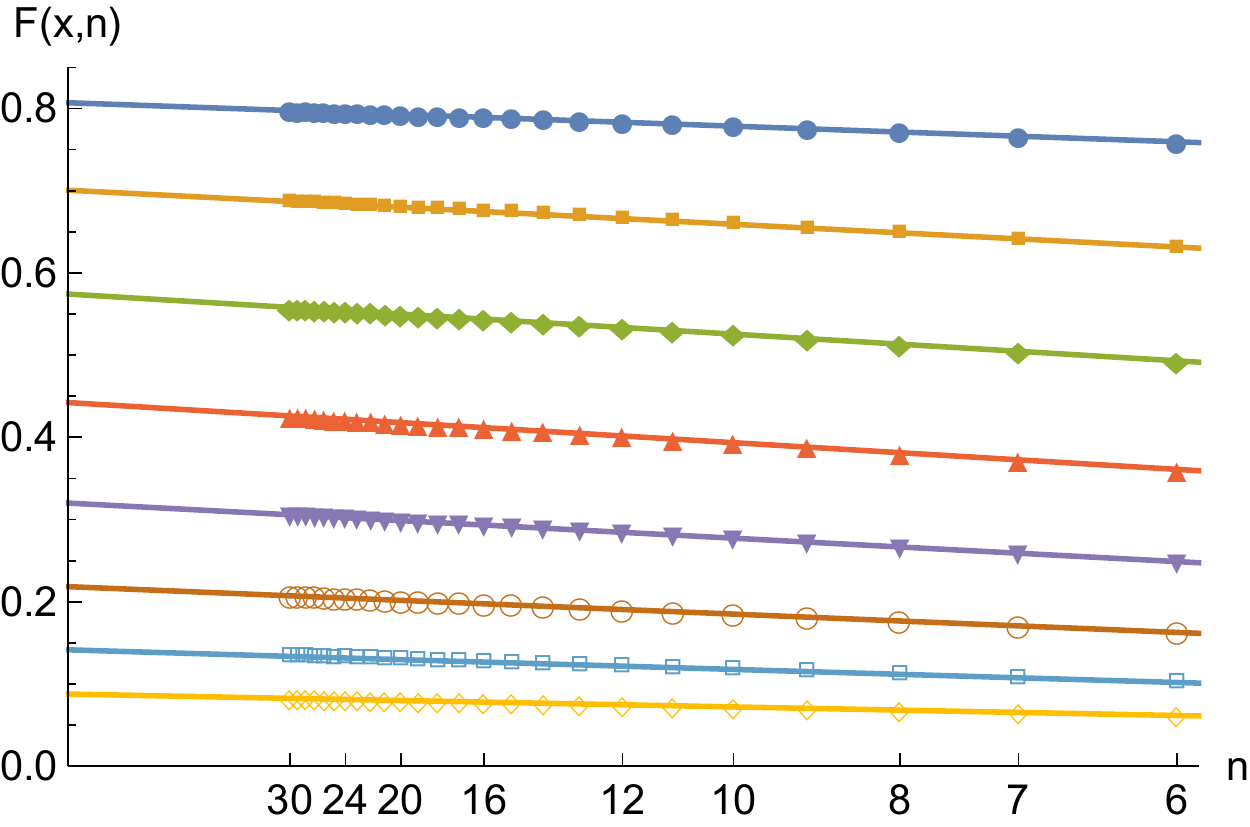}
  \caption{\label{fig:CGFxvn} Gaussian entries. $F(x,n)$ for
    $n=6,\ldots,30$ at $x=1$, $0.75$, $0.56$, $0.42$, $0.32$, $0.24$,
    $0.18$ and $0.13$ (downwards).  Lines are fitted to each set of
    points. Error bars are smaller than the points.}
\end{figure}

In Fig.~\ref{fig:CGlogFx} we show a log-log plot of the asymptotic
$F(x)$.  We have fitted a line with slope $2$ through the points
$x<0.05$ which corresponds to the approximation $F(x) \sim 6.0(1)
x^2$. The lower inset of the figure shows the ratio $F(x)/x^2$ which
plausibly approaches the limit $6$ despite some significant noise
beginning for $x\lesssim 0.01$.  The upper inset of
Fig.~\ref{fig:CGlogFx} shows a log-log plot of the difference
$6x^2-F(x)$ approximated by $7(1) x^3$ (note the rather wide error
bar), as indicated by the red line having slope $3$. Together they
suggest $F(x) \sim 6x^2-7x^3$. The error estimates captures how
sensitive the coefficients are to deleting data for individual $n$ and
which interval of $x$ we fit lines to.

There is also the matter of finite-size scaling to take into account,
i.e., the $n$-dependence. A similar analysis of the slopes of the
lines in Fig.~\ref{fig:CGFxvn}, i.e., the parameter $C(x)$ mentioned
above, gives that $C(x)\sim -12.0(5) x^2$.  Including the finite-size
term we then have the finite-size scaling
\begin{eqnarray}
  F(x,n) &\sim& \left(6 - \frac{12}{n}\right) x^2 - 7x^3,\label{CGFasym} \\
  f(x,n) &\sim& \left(12 - \frac{24}{n}\right) x - 21 x^2\label{CGfasym}
\end{eqnarray}
In the next subsection we will try to obtain a finite-size correction
for the $x^3$-term. In any case, we have so far found that
$F(x)\lesssim 6x^2$ for all $n$ and $x$ which supports that the
anti-concentration conjecture is true.

\begin{figure}
  \includegraphics[width=3.4in]{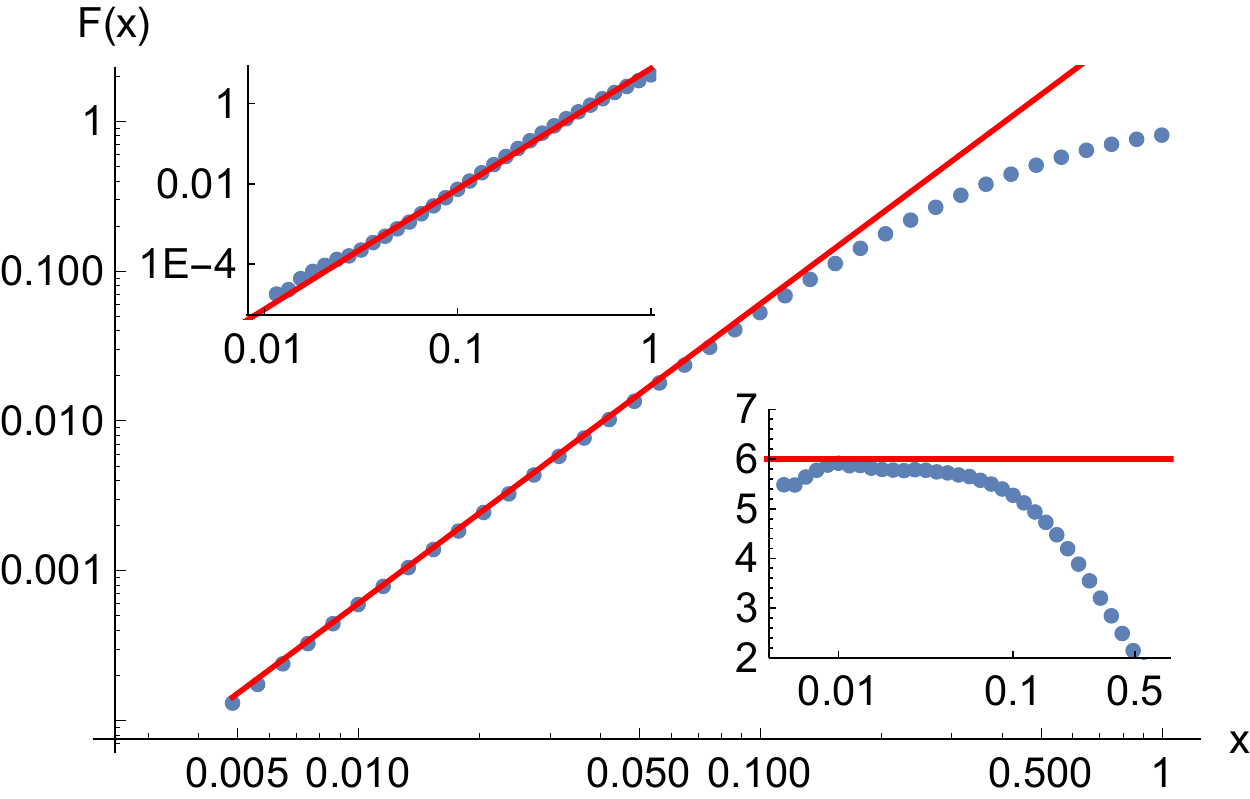}
  \caption{\label{fig:CGlogFx} Gaussian entries. Log-log plot of
    $F(x)$ and the line (red) corresponding to the approximation
    $6x^2$ (see text). The upper inset shows a log-log plot of $6x^2 -
    F(x)$ and a line (red) corresponding to $7x^3$. The lower inset
    shows the ratio $F(x)/x^2$ and the line (red) is at $y=6$.}
\end{figure}

%------------------------------------------------------------------------------
\subsection{The distribution of the squares}
Let us now see how this translates to the distribution of the squares
$X^2$.  The distribution function of $X^2$ is $F_2(x)=\Pr(X^2\le x) =
\Pr(X\le \sqrt{x}) = F(\sqrt{x})$. This gives the density function
$f_2(x) = F'_2(x) = f(\sqrt{x})/(2\sqrt{x})$. Using Eq.~\eqref{CGFasym}
and Eq.~\eqref{CGfasym} we obtain
\begin{eqnarray}
  F_2(x) &\sim& \left(6 - \frac{12}{n}\right) x - 7x^{3/2}\label{CGF2asym} \\
  f_2(x) &\sim& 6 - \frac{12}{n} - 10.5\sqrt{x}\label{CGf2asym}
\end{eqnarray}

In Fig.~\ref{fig:CGdist2-n30} we show the density function $f_2(x,n)$
of $X^2$ for $n=30$. It was speculated in Ref.~\cite{aaronson:13} that
the density $f_2(x)$ goes to infinity when $x\to 0$ but we claim that
it goes to a limit value. Our approximation for $F(x)$ stays relevant
for $x\lesssim 0.10$ which from the perspective of the squares $X^2$
means that our formula for $f_2(x)$ is relevant only for $x<0.01$. We
are here at the lower $5$\% of our data set so our analysis demands a
large number of samples.

\begin{figure}
  \includegraphics[width=3.4in]{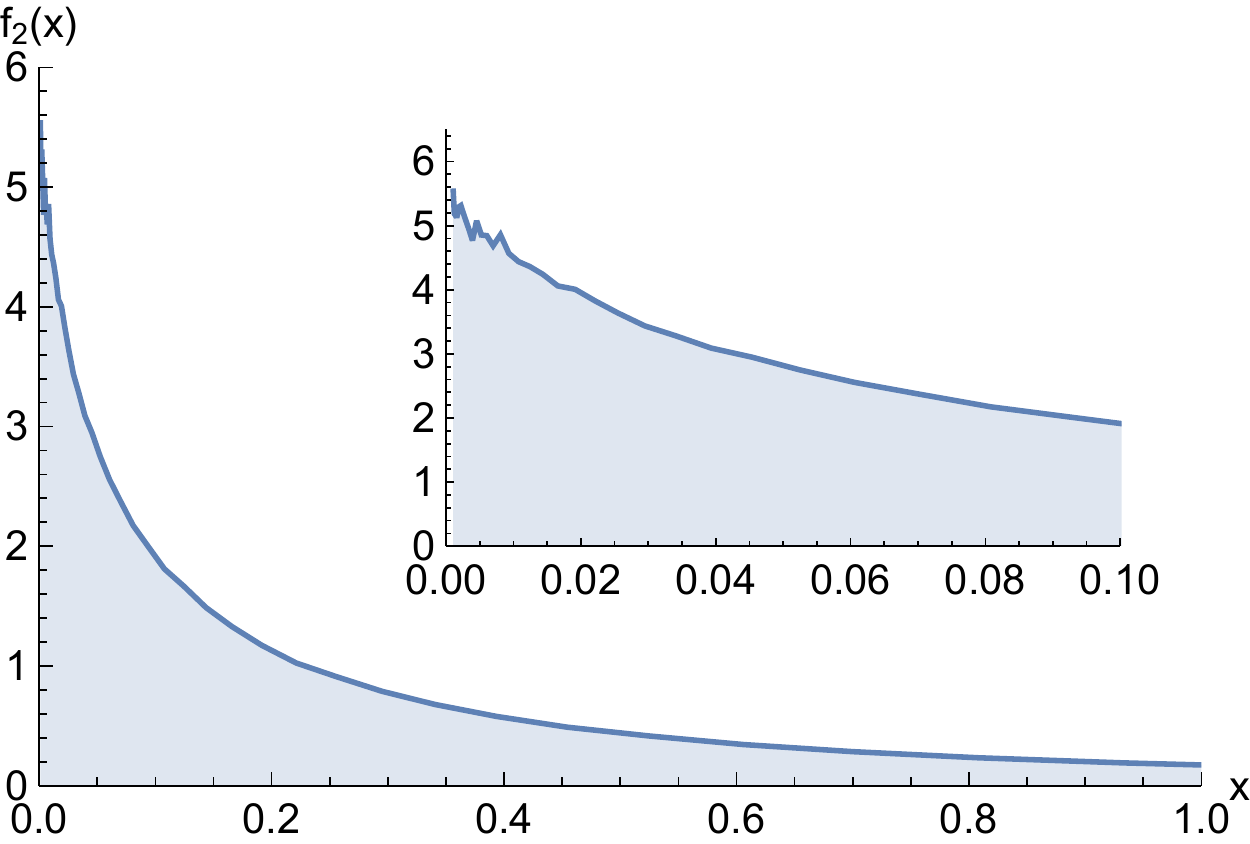}
  \caption{\label{fig:CGdist2-n30} Gaussian entries. Distribution
    density $f_2(x,30)$ of $X^2$ for $n=30$. The shown range covers
    $80\%$ of the samples. The inset shows a zoomed-in version
    convering $30\%$ of the samples.}
\end{figure}

Fitting the curve $y=C_0+C_1\sqrt{x}$ to the measured $f_2(x,n)$ for
the range $0.0005<x<0.01$ we find how the coefficients $C_0$ and $C_1$
depend on $n$. Note that $C_0$ should scale as $C_0=6.0(1)-12.0(5)/n$
as in Eq.~\eqref{CGf2asym}. This is confirmed in
Fig.~\ref{fig:CGf2scaling}.  The inset shows how $C_1$ depends on $n$
and we find $C_1=10.5(9) + 36(6)/n$ though the points are here quite
scattered, which is reflected in the error bars.  Again, the error
bars indicate how much the result depends on the choice of fitted
points (less) and range (more).

\begin{figure}
  \includegraphics[width=3.4in]{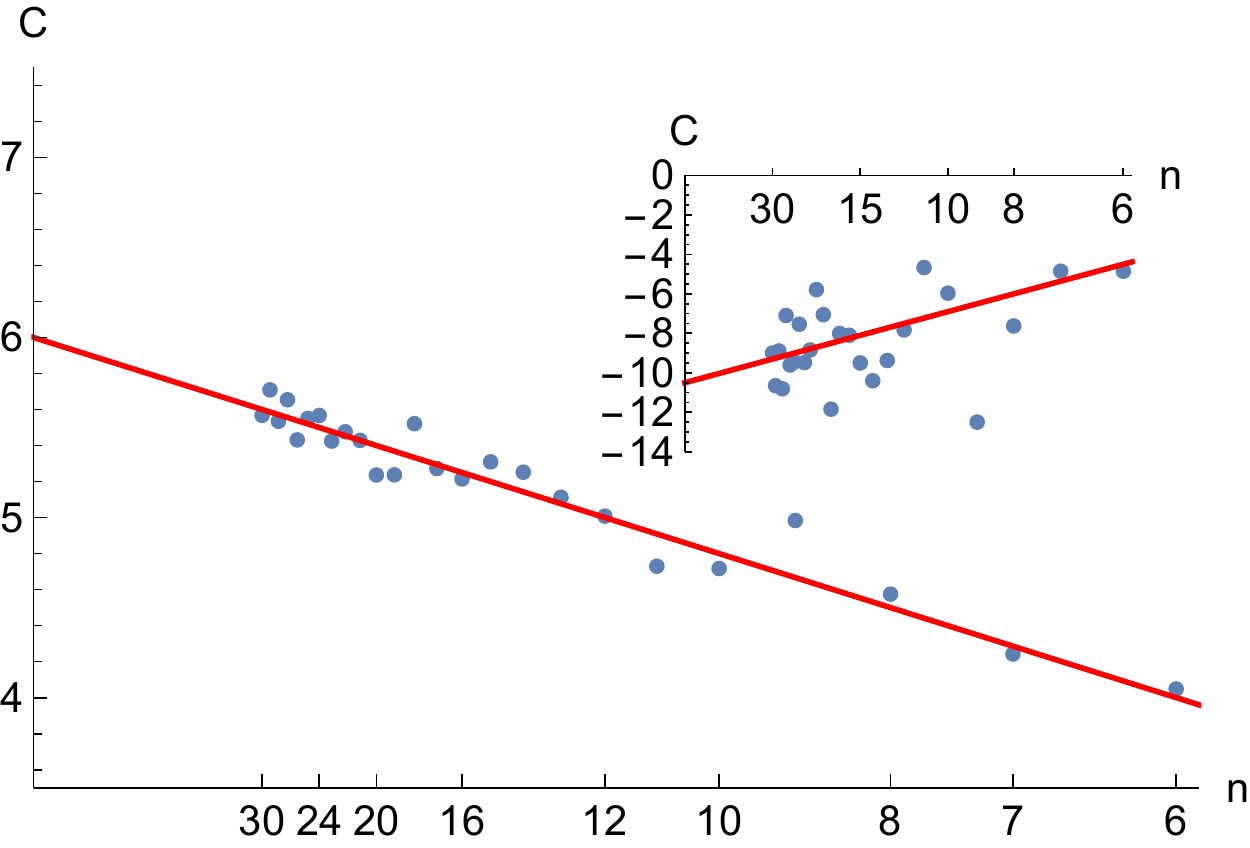}
  \caption{\label{fig:CGf2scaling} Gaussian entries. Scaling of the
    coefficients $C_0$ and $C_1$ (inset) for $f_2(x,n) = C_0 +
    C_1\sqrt{x}$ fitted to $0.0005<x<0.01$. The red lines are
    $y=6-12/n$ and $y=-10.5+36/n$ (inset).}
\end{figure}

Adding these new terms we find the following finite-size scaling
rules
\begin{eqnarray}
  F(x,n) &\sim& \left(6 - \frac{12}{n}\right) x^2 + \left(-7 +
  \frac{24}{n}\right) x^3,
  \label{CGFasym2} \\
  f(x,n) &\sim& \left(12 - \frac{24}{n}\right) x + \left(-21
  +\frac{72}{n}\right) x^2,
  \label{CGfasym2} \\
  F_2(x,n) &\sim& \left(6 - \frac{12}{n}\right) x + \left(-7 +
  \frac{36}{n}\right) x^{3/2},
  \label{CGF2asym2} \\
  f_2(x,n) &\sim& 6 - \frac{12}{n} + \left(-10.5 +
  \frac{36}{n}\right)\sqrt{x}
  \label{CGf2asym2}
\end{eqnarray}
with the limits
\begin{eqnarray}
  F(x) &\sim& 6 x^2 -7 x^3 \label{CGFlimit2} \\
  f(x) &\sim& 12 x - 21 x^2 \label{CGflimit2} \\
  F_2(x) &\sim& 6 x -7 x^{3/2} \label{CGF2limit2} \\ 
  f_2(x) &\sim& 6 -10.5 \sqrt{x}\label{CGf2limit2}
\end{eqnarray}

In Fig.~\ref{fig:CGfx} we show the measured $f(x,n)$ and the function
in Eq.~\ref{CGfasym2} for $n=10,20,30$ on $x<0.1$. The fit is quite
excellent. In Fig.~\ref{fig:CGf2x} we show the measured $f_2(x,n)$ and
Eq.~\eqref{CGf2asym2} for $x<0.008$, again with a good fit though the
error bars for $f_2$ are here quite noticeable.

\begin{figure}
  \includegraphics[width=3.4in]{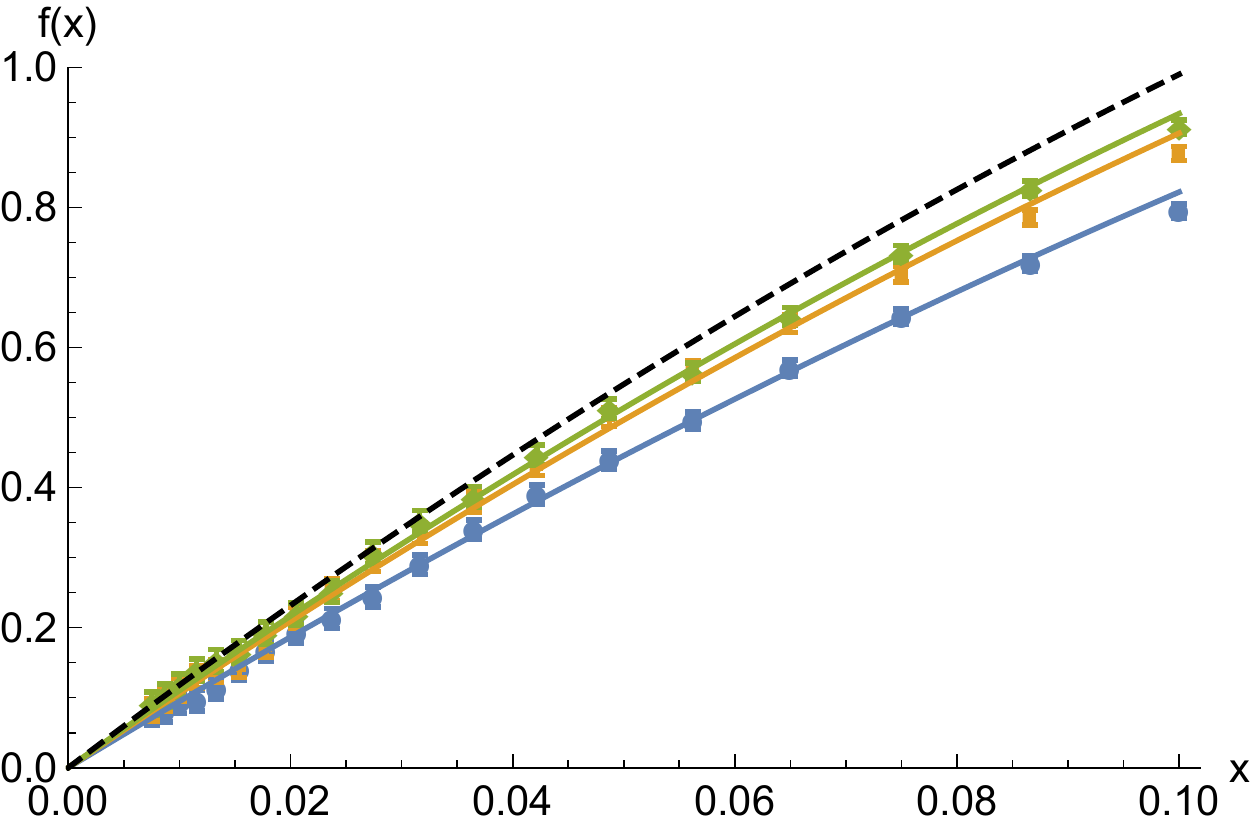}
  \caption{\label{fig:CGfx} Gaussian entries. Measured density
    $f(x,n)$ (points) and the scaling rule of Eq.~\eqref{CGfasym2} for
    $n=10, 20, 30$ (blue, orange, green curves; upwards) and $\infty$
    (dashed black curve). Error bars are smaller than the points.}
\end{figure}

\begin{figure}
  \includegraphics[width=3.4in]{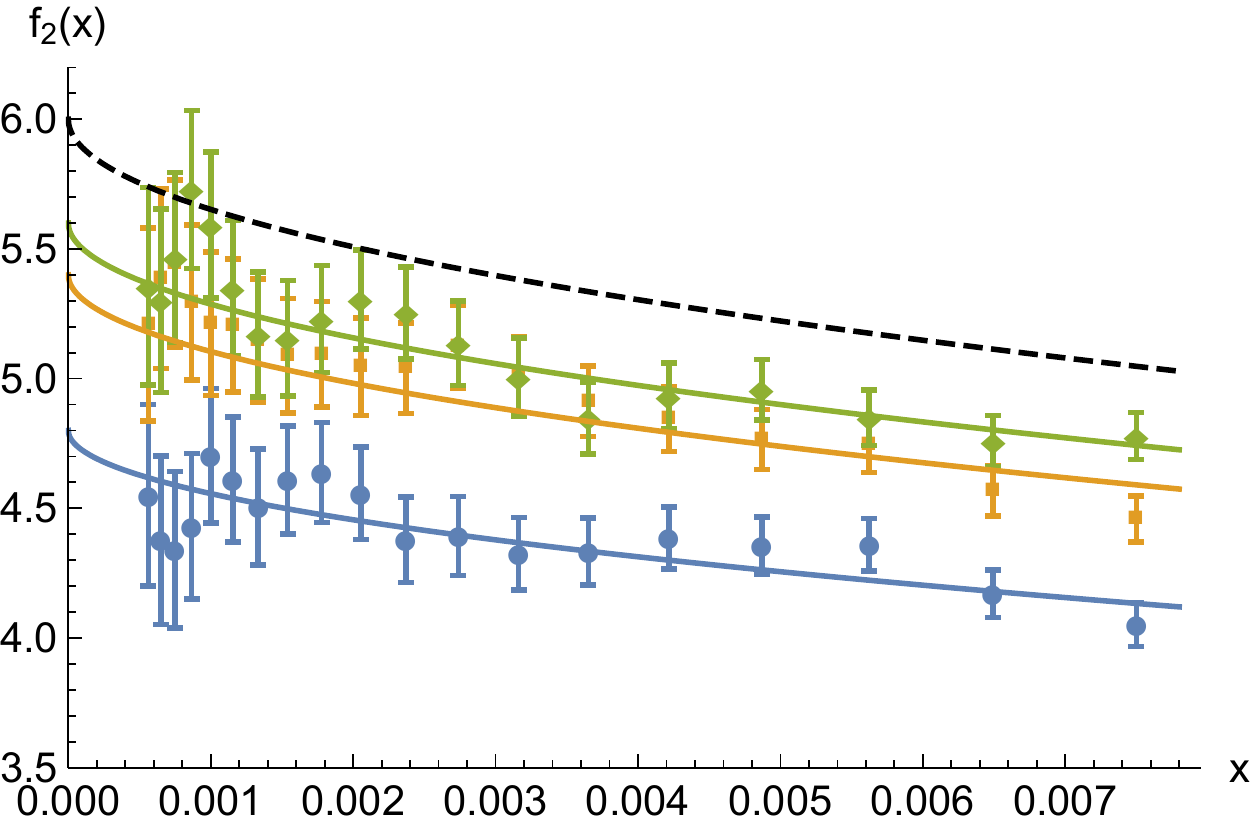}
  \caption{\label{fig:CGf2x} Gaussian entries. Measured density
    $f_2(x,n)$ (points) and the scaling rule of Eq.~\eqref{CGf2asym2}
    for $n=10$, $20$, $30$ (blue, orange, green curves; upwards) and
    $\infty$ (dashed black curve). Error bars become significant for
    $x\lesssim 0.001$.}
\end{figure}

%-----------------------------------------------------------------------------

\section{Complex circular distribution}
Here we let the entries of the matrix be random complex numbers of
modulus $1$, i.e.~we let each entry be of the form
$\exp(\imath\theta)$ where $\theta$ is uniformly distributed on the
interval $\lbrack 0,2\pi)$. For each size $n$ we have collected data
for $10^7$ such random matrices, for $n=1,2,\ldots,30$. As before, we
study the distribution of $X=|\per{A}|/\sqrt{n!}$. Our investigation
proceeds as in the previous section though we are here armed with
better data.

The mean $\mean{X}$ is shown in Fig.~\ref{fig:CCmom1} and from fitted
2nd degree polynomials we estimate the limit $0.7753(2)$. The second
moment is again exactly $1$ but obviously we see some small
fluctuations. The behaviour is quite similar to that in
Fig.~\ref{fig:CGmom2} but with less noise. The third moment
$\mean{X^3}$ in Fig.~\ref{fig:CCmom3} prefers the limit $2.41(2)$, as
per fitted polynomials as before.  The fourth moment $\mean{X^4}$,
shown in Fig.~\ref{fig:CCmom4}, has a linear behavior just as for the
Gaussian case in Fig.~\ref{fig:CGmom4}. A rough estimate of its
behavior, based on $n\le 13$, is $\langle X^4 \rangle\sim 0.72(2) +
0.37(1) n$.

\begin{figure}
  \includegraphics[width=3.4in]{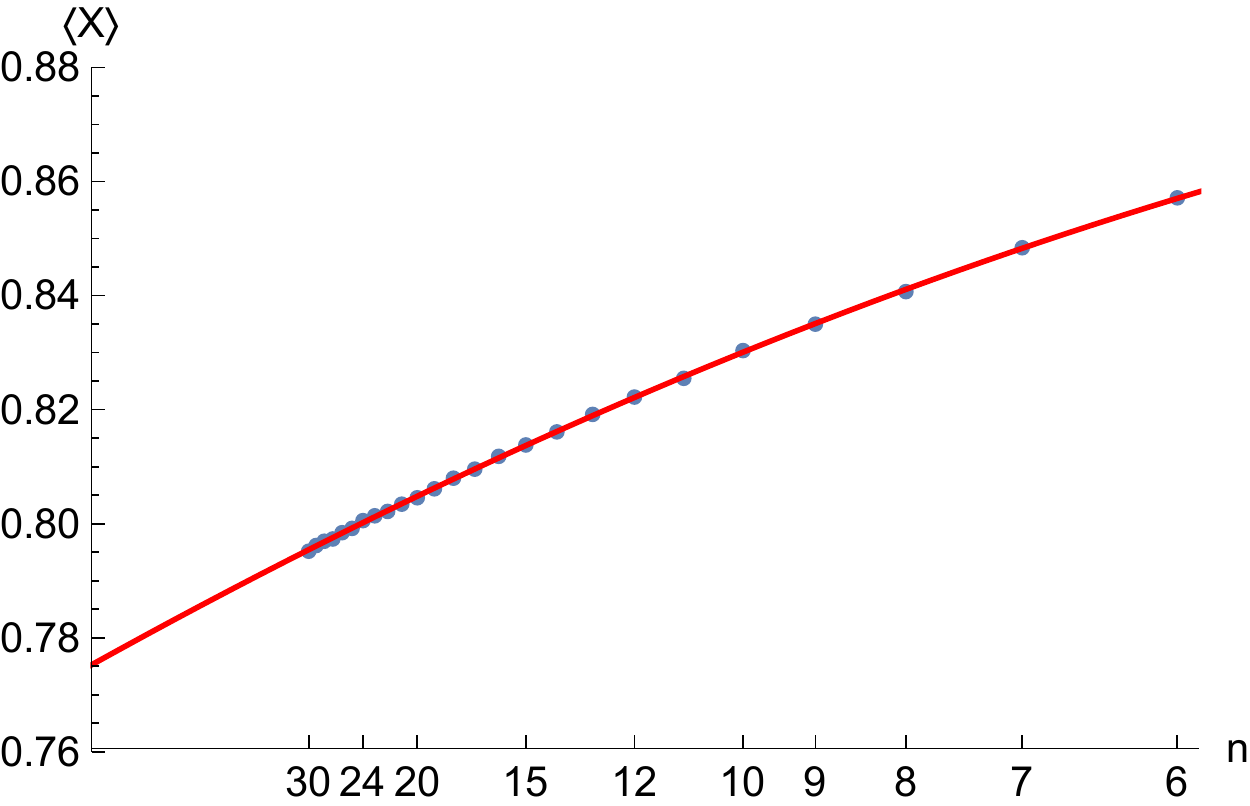}
  \caption{\label{fig:CCmom1} Complex circular entries. Mean value
    $\mean{X}$ versus $1/n$ for $n=6,7,\ldots,30$ and the fitted
    polynomial $y=0.7753 + 0.634x - 0.862x^2$ where $x=1/n$. Error
    bars are smaller than the points.}
\end{figure}

\begin{figure}
  \includegraphics[width=3.4in]{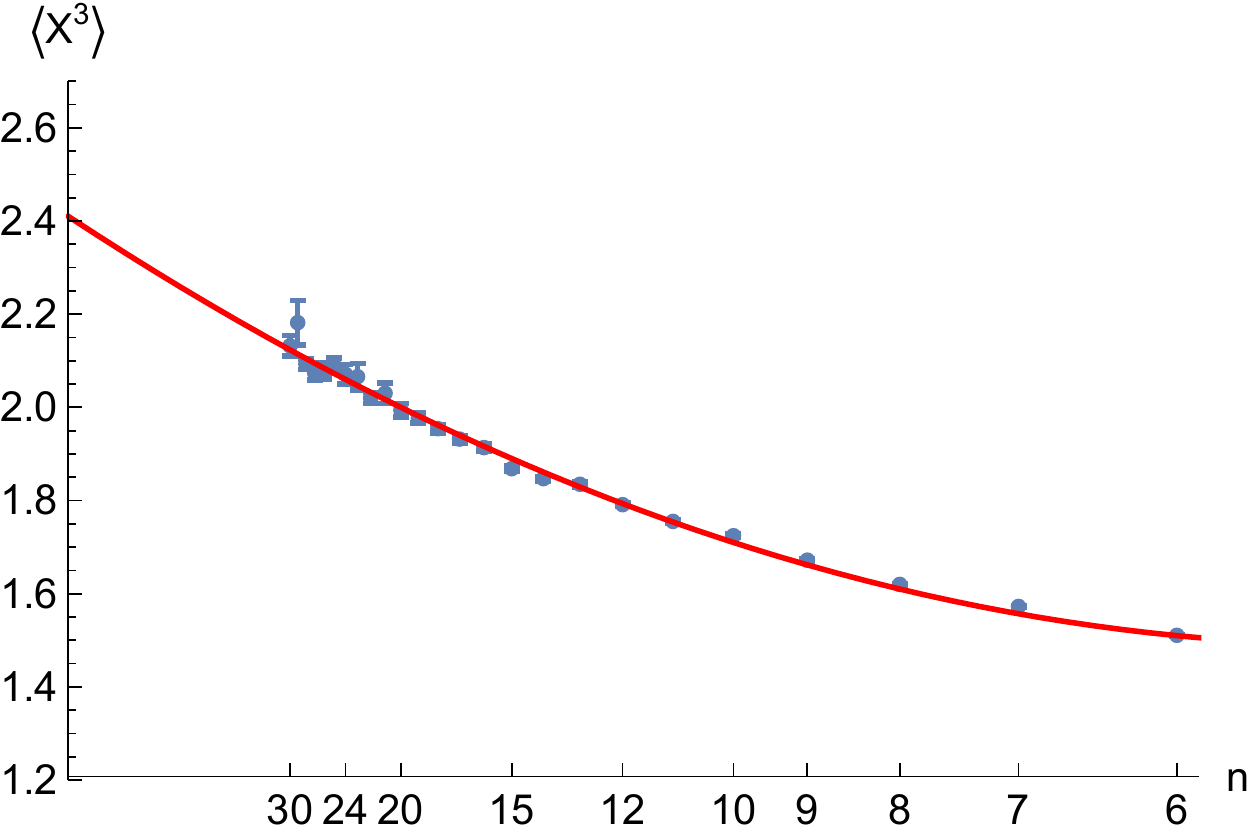}
  \caption{\label{fig:CCmom3} Complex circular entries. Mean value
    $\mean{X^3}$ versus $1/n$ for $n=6,7,\ldots,30$ and the fitted
    polynomial $y=2.41 - 9.4x + 24 x^2$.}
\end{figure}

\begin{figure}
  \includegraphics[width=3.4in]{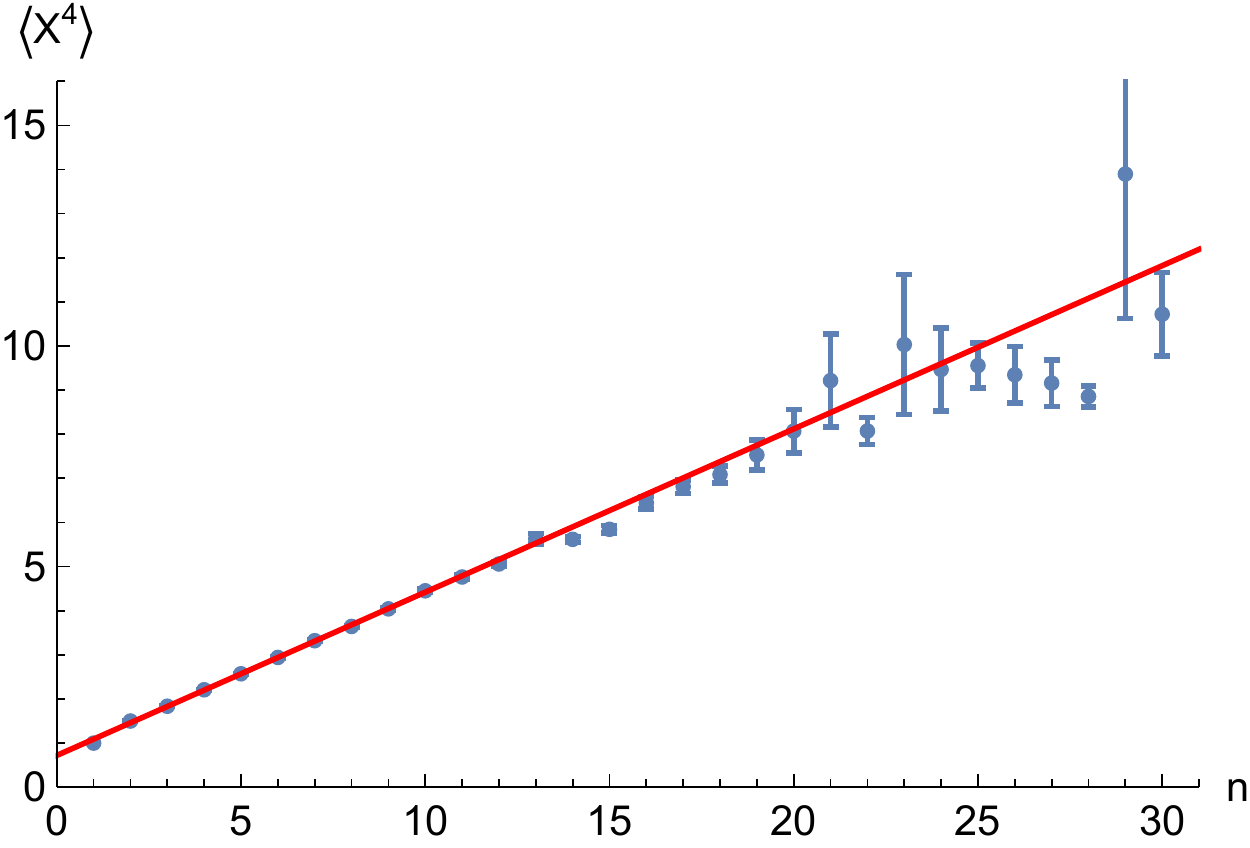}
  \caption{\label{fig:CCmom4} Complex circular entries. Mean value
    $\mean{X^4}$ versus $n$ for $n=1,2,\ldots,30$ and the line $y=0.72
    + 0.37 n$.}
\end{figure}

The distribution density $f(x)$ looks very similar to that of the
Gaussian case. To estimate the limit distribution function $F(x)$ we
use the ansatz $F(x,n)=C_0+C_1(x)/n+C_2(x)/n^2$ on the points $n\ge
5$.  Note that the linear ansatz used in the Gaussian case is not
sufficient here.  In Fig.~\ref{fig:CClogFx} we show a log-log plot of
the limit $F(x)$.  The red line has slope $2$ so again we expect
$F(x)\propto x^2$ for small $x$. The inset shows the ratio $F(x)/x^2$
and, despite the noise setting in at $x<0.01$, we estimate
$F(x)/x^2\sim 2.135(10)$. The error bar includes both errors from
excluding points in the $F(x,n)$ ansatz and errors depending on which
points $x$ to include (we have used $0.01\le x \le 0.08$). Clearly, as
shown by the inset figure, the rule $F(x)\sim 2.135 x^2$ breaks down
for $x\gtrsim 0.1$. Including the finite-size scaling we find the
following rule useful for $x\le 0.1$ for $n\ge 5$:
\begin{equation}\label{CCFasym}
  F(x,n) \sim \left(2.135 - \frac{8}{n} + \frac{15}{n^2}\right)\, x^2
\end{equation}

\begin{figure}
  \includegraphics[width=3.4in]{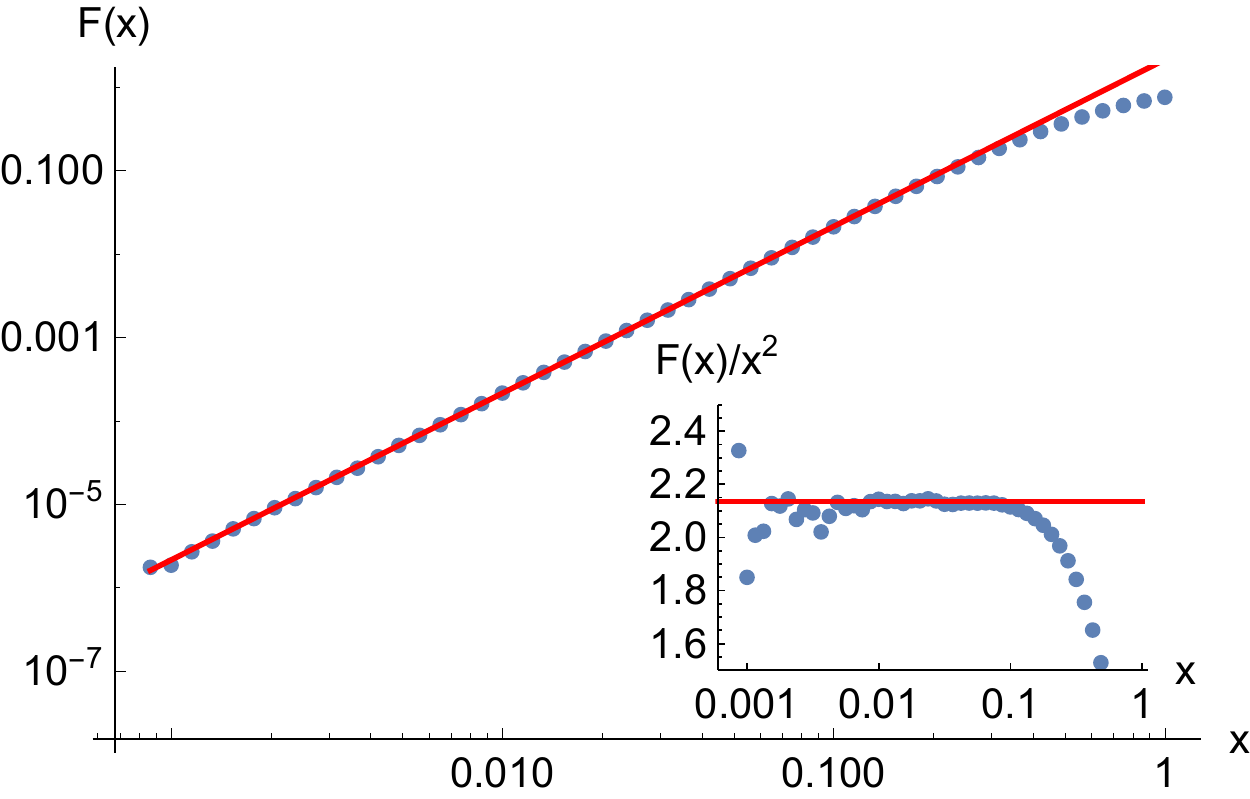}
  \caption{\label{fig:CClogFx} Complex circular entries. Log-log plot
    of limit $F(x)$ versus $x$ for $x\le 0.1$. The red line, having
    slope $2$, corresponds to the approximation $F(x)\sim 2.135
    x^2$. The inset shows $F(x)/x^2$ versus $\log x$ and the red line
    is the constant $y=2.135$. Noise becomes noticable for $x\lesssim
    0.01$.}
\end{figure}

Information on higher order terms are easier found when studying the
distribution of the squares. The distribution in Eq.~\eqref{CCFasym}
translates to $f_2(x,n) = 2.135 - 8/n + 15/n^2$ so that the limit
density is just the constant $2.135$ for, say, $x\lesssim 0.01$.
However, plotting $f_2(x,n)$ reveals that the density functions are
very close to linear for $x\lesssim 0.1$ and all $n\ge 5$.  Using the
simple ansatz $f_2(x,n) = C_0 + C_1 x/n$ and fitting on the interval
$0.005\le x\le 0.08$ we find that the slope scales as $C_1 = -6.0(2) +
46(3)/n - 110(20)/n^2$ where the error bars mainly indicate
sensitivity to which points are included. Note that the constant
coefficient must scale as $C_0 = 2.135 - 8/n + 15/n^2$, see
Eq.~\eqref{CCFasym}. To conclude, after defining the coefficients
\begin{eqnarray}
  C_0 &=&2.135 - \frac{8}{n} + \frac{15}{n^2}\\
  C_1 &=& -3 + \frac{23}{n} - \frac{55}{n^2}
\end{eqnarray}
we obtain the finite-size scaling rules
\begin{eqnarray}
  F(x,n) &\sim& C_0 \,x^2 + C_1\,x^4 \label{CCFasym2}\\
  f(x,n) &\sim& 2C_0\,x + 4C_1\,x^3  \label{CCfasym2}\\
  F_2(x,n) &\sim& C_0\,x + C_1\,x^2  \label{CCF2asym2}\\
  f_2(x,n) &\sim& C_0 + 2C_1\,x      \label{CCf2asym2}
\end{eqnarray}
and their respective limits become
\begin{eqnarray}
  F(x) &\sim& 2.135 x^2 -3 x^4  \label{CCFlimit}\\
  f(x) &\sim& 4.27 x -12 x^3    \label{CCflimit}\\
  F_2(x) &\sim& 2.135 x - 3 x^2 \label{CCF2limit}\\
  f_2(x) &\sim& 2.135 -6 x      \label{CCf2limit},
\end{eqnarray}

In Figs.~\ref{fig:CCfx} and \ref{fig:CCf2x} we compare the rules for
$f(x,n)$ and $f_2(x,n)$ to their measured counterparts. They fit very
well over a surprisingly wide interval. Translating the limit
$f_2(x)=2.135 - 6x$ to $F(x)$ we find $F(x) = 2.135x^2-3x^4$ which
adds a correction term to Eq.~\eqref{CCFasym}.  Note that in
Eq.~\eqref{CCFlimit} this correction term is of order $x^4$ whereas it
was of order $x^3$ in the Gaussian case.

\begin{figure}
  \includegraphics[width=3.4in]{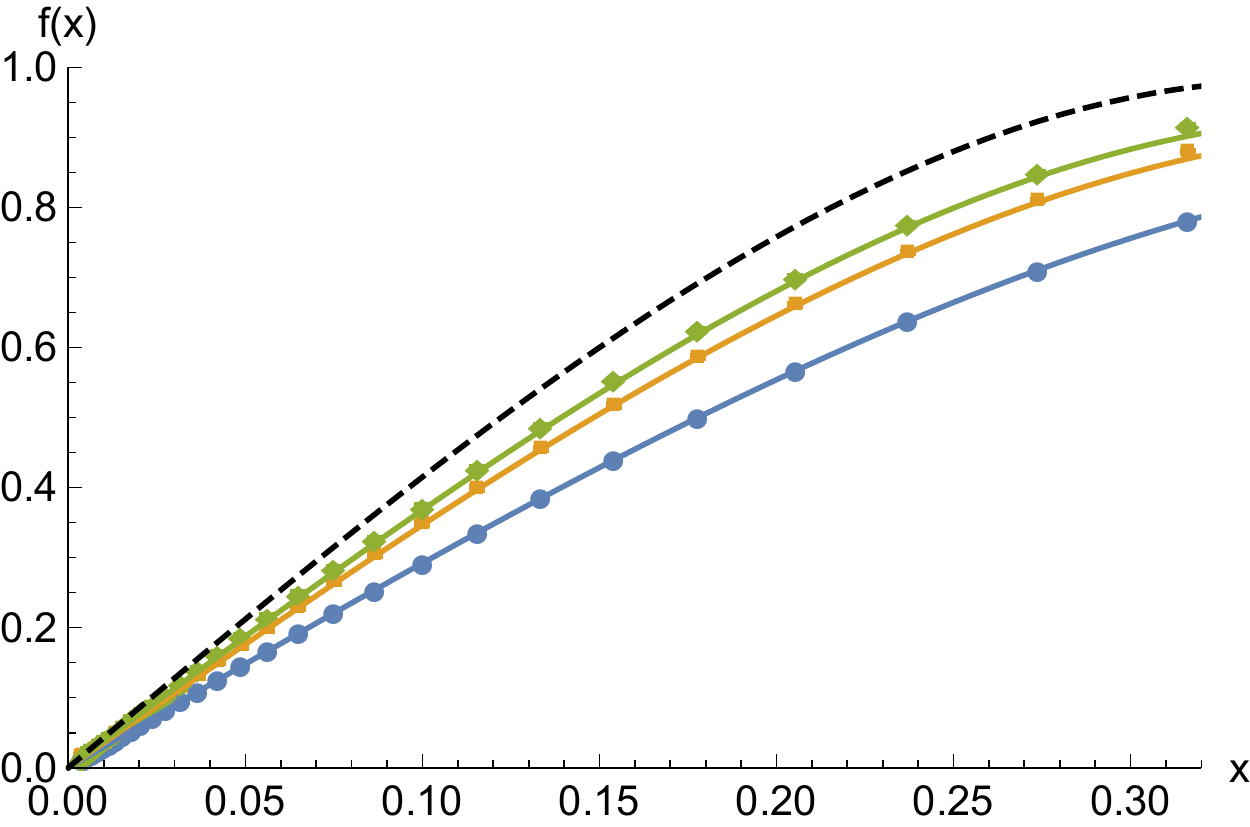}
  \caption{\label{fig:CCfx} Complex circular entries. Measured density
    $f(x,n)$ (points) and the scaling rule of Eq.~\eqref{CCfasym2} for
    $n=10$, $20$, $30$ (blue, orange, green curves; upwards) and
    $\infty$ (dashed black curve). Error bars are smaller than the
    points.}
\end{figure}

\begin{figure}
  \includegraphics[width=3.4in]{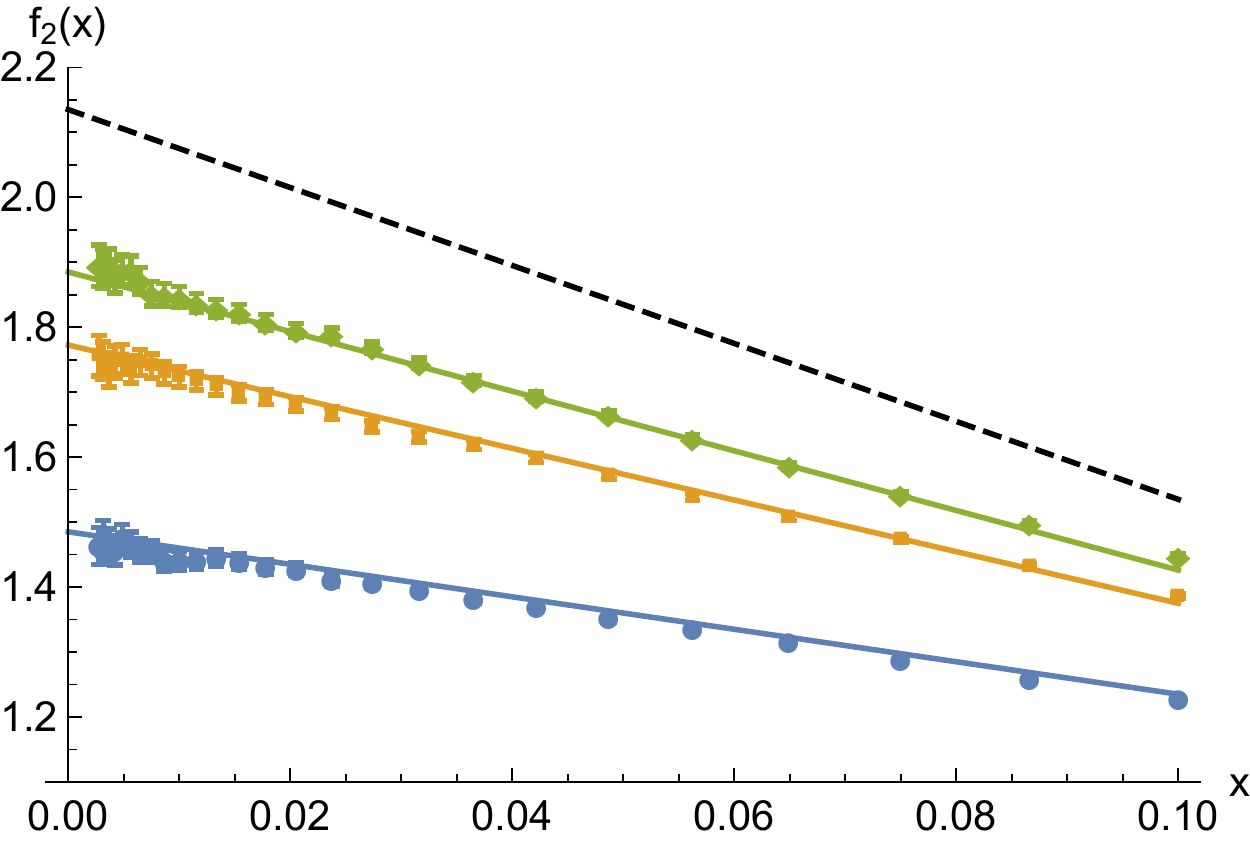}
  \caption{\label{fig:CCf2x} Complex circular entries. Measured density
    $f_2(x,n)$ (points) and the scaling rule of Eq.~\eqref{CCf2asym2} for
    $n=10$, $20$, $30$ (blue, orange, green curves; upwards) and
    $\infty$ (dashed black curve).}
\end{figure}

%-----------------------------------------------------------------------------

\section{Bernoulli-distributed entries}
We will here apply the approach in the previous section to a discrete
class of random matrices, that of Bernoulli-distributed $\pm 1$
entries where $\Pr(+1)=\Pr(-1)=1/2$. Matrices with
Bernoulli-distributed entries have been studied in the mathematics
literature, with an emphasis on the probability for small values of
$X$. In \cite{MR2483225} it was proven that with probability tending
to 1 $X$ is larger than $n^{(n/2-\epsilon)}$ for any fixed
$\epsilon>0$, and that it is likewise smaller than $n^{(n/2
  +\epsilon)}$ with probability tending to 1.  Those authors also
conjectured that the lower bound can be improved to $\exp(-c n)
n^{n/2}$ for some constant $c$, and we will comment more on this
later.

Our data consists of $10^6$ samples for
$n=1,2,\ldots,30$. Unfortunately we can not obtain quite the same
level of precision in our scaling analysis as for the complex Gaussian
and circular cases. Strong finite-size effects and erratic behavior
for smaller $n$ calls for larger matrices and many more samples.

Starting out with the first moment $\mean{X}$ in Fig.~\ref{fig:Bmom1}
we find that it is asymptotically $0.647(2)$. The second moment is of
course $1$ and the third moment is asymptotically $3.75(5)$ (not
shown). The fourth moment, as in the Gaussian case, appears to grow
linearly with $n$ but we only give the very rough estimate $\mean{X^4}
\approx 1.25(5) n - 1.0(5)$ due to its erratic behavior.

\begin{figure}
  \includegraphics[width=3.4in]{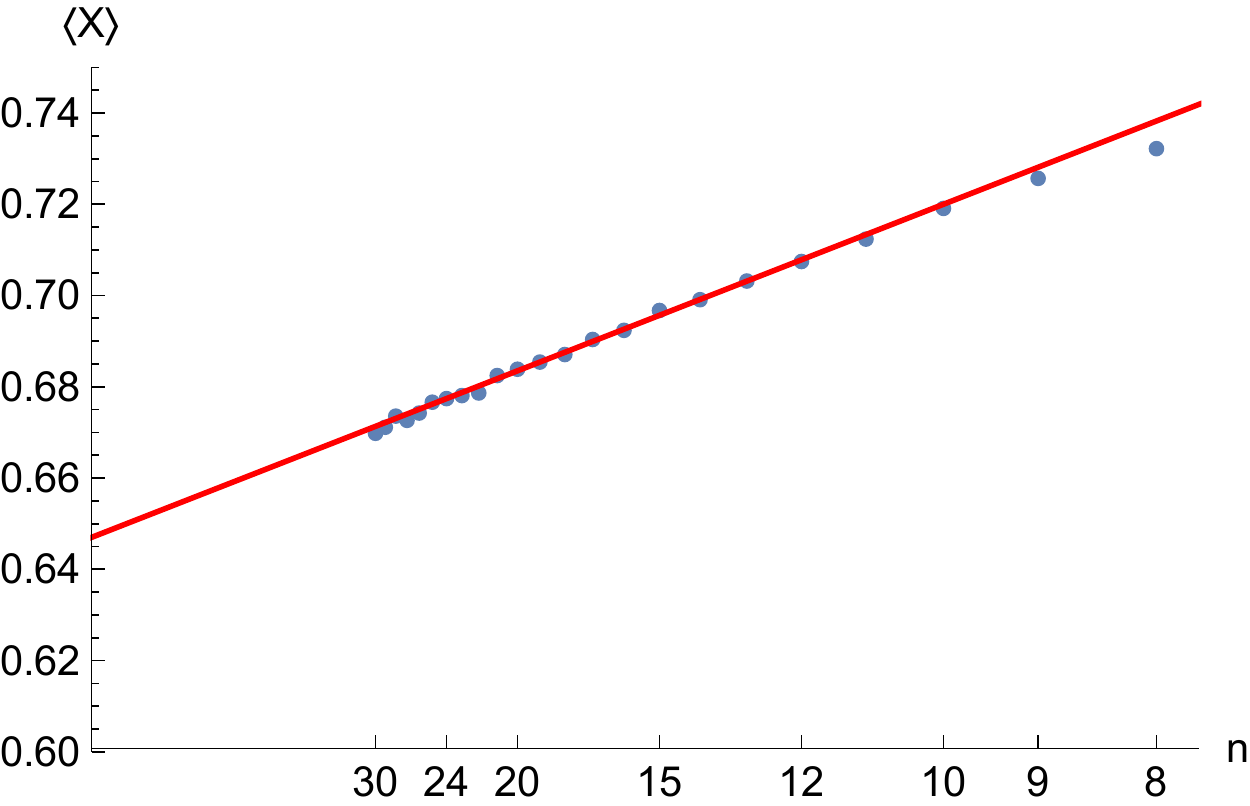}
  \caption{\label{fig:Bmom1} Bernoulli entries. Mean value $\mean{X}$
    versus $1/n$ for $n=8,9,\ldots,30$ and the line $y=0.647 + 0.73x$
    (red) where $x=1/n$. Error bars are smaller than the points.}
\end{figure}

In Fig.~\ref{fig:BlogFx} we show a log-log plot of $F(x)$ plotted
versus $\log x$ where $F(x)$ was obtained using a similar finite-size
scaling ansatz as in the previous cases. However, we note here that
care must be taken to only include matrices large enough since the
corrections-to-scaling are considerably larger in this case.  We have
only used $n\ge 15$ which contributes some noise to the estimated
$F(x)$ since we fit on fewer points. The red line in
Fig.~\ref{fig:BlogFx} has slope $1$ and corresponds to the estimate
$F(x)\sim 1.33(5) x$. The inset shows the ratio $F(x)/x$ which is
clearly approaching a limit around $1.33(5)$.

\begin{figure}
  \includegraphics[width=3.4in]{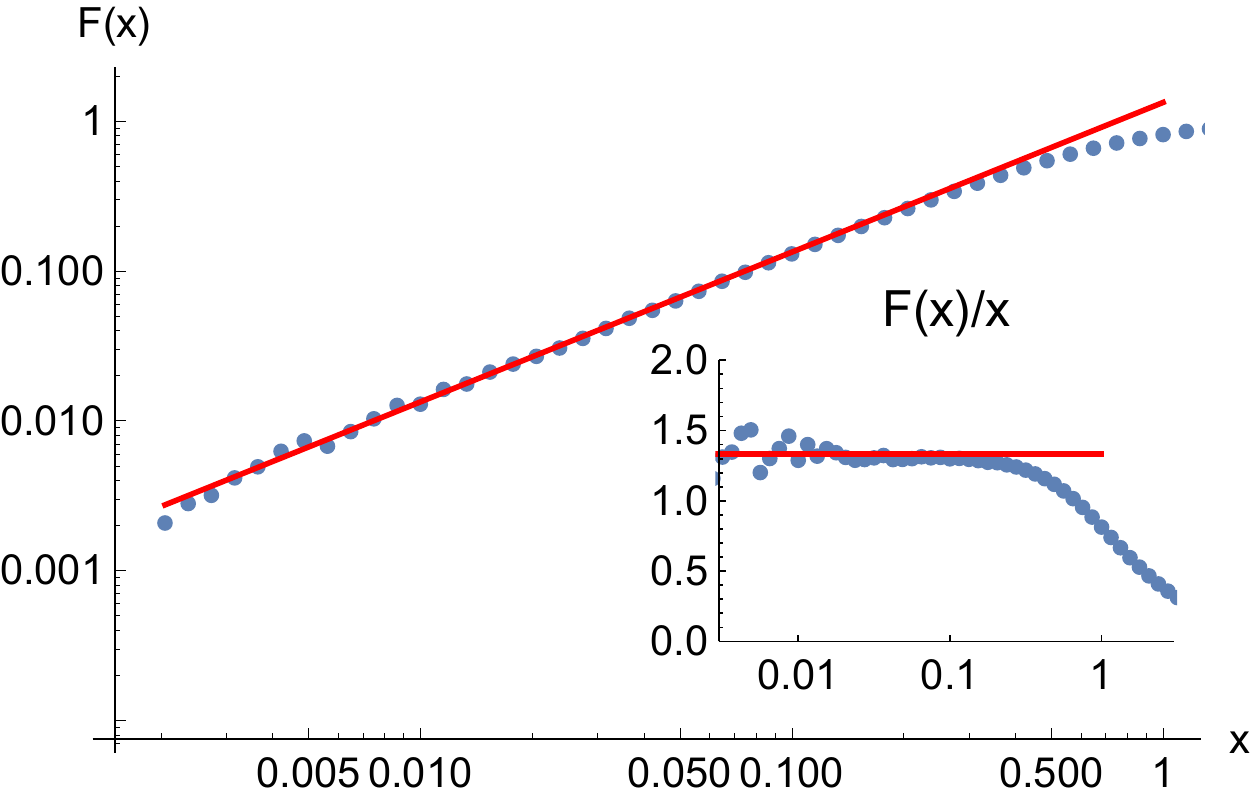}
  \caption{\label{fig:BlogFx} Bernoulli entries. Log-log plot of
    $F(x)$ and a line with slope $1$ (red) corresponding to the
    approximation $F(x)\sim 1.33 x$. The inset shows the ratio
    $F(x)/x$ versus $\log x$ together with the line $y=1.33$ (red).}
\end{figure}

Unfortunately our data are not good enough for a correction term of
higher order and pin-pointing the finite-size scaling would also be an
unreliable affair. We will simply translate our distribution function
into the distribution of $X^2$. In conclusion we thus find the
asymptotes
\begin{eqnarray}
  F(x) &\sim & 1.33 x \\
  f(x) &\sim & 1.33 \label{Bflimit} \\
  F_2(x) &\sim& 1.33 \sqrt{x}\\
  f_2(x) &\sim& 0.665/\sqrt{x} \label{Bf2limit}
\end{eqnarray}

Note here that we claim that $f_2(x)\to\infty$ when $x\to 0$ unlike
for the complex Gaussian and circular cases where $f_2(x)$ approached
a limit of $6.0(1)$ and $2.135(10)$ respectively.  In
Fig.~\ref{fig:Bdist12-n30} we show the measured $f(x,n)$ (inset) and
$f_2(x,n)$ for $n=30$ and compare them to the limits of
Eq.~\eqref{Bflimit} and \eqref{Bf2limit}.

\begin{figure}
  \includegraphics[width=3.4in]{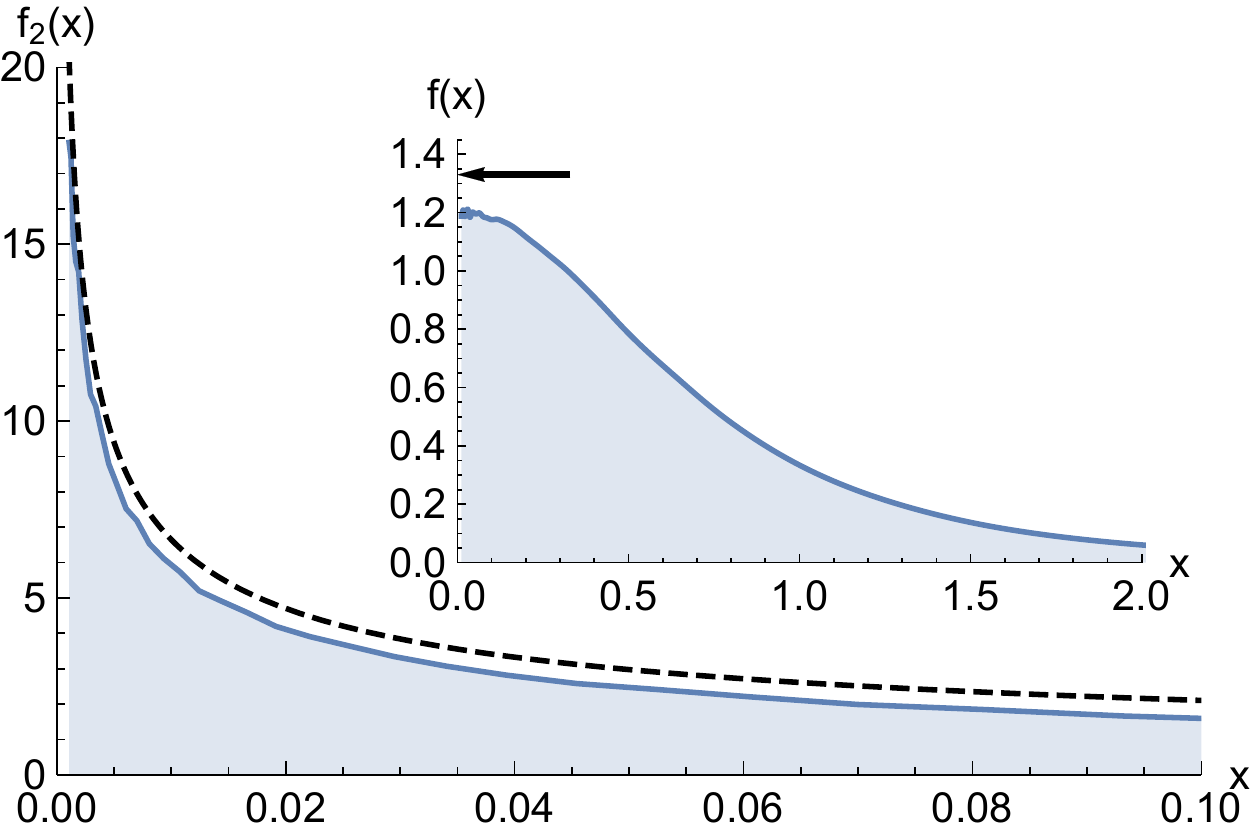}
  \caption{\label{fig:Bdist12-n30} Bernoulli entries. Measured density
    of $X^2$ for $n=30$, i.e.~$f_2(x,30)$, and the estimated limit $f_2(x) =
    0.665/\sqrt{x}$ (dashed black curve). The inset shows the measured
    density of $X$ for $n=30$, i.e.~$f(x,30)$. The black arrow indicates
    the limit $f(x)=1.33$.}
\end{figure}

Finally we note that our data is compatible with a strengthening of
the conjecture from \cite{MR2483225}
\begin{conjecture}
  $X$ is asymptotically almost surely larger than $$h(n)\exp(-n/2)
  n^{(n/2+1/4)},$$ where $h(n)$ is any function tending to 0 as
  $n\rightarrow \infty.$
\end{conjecture}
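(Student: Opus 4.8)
The plan is to reduce the conjecture to a single uniform anti-concentration (small-ball) estimate and then attack that estimate by a row expansion combined with a Littlewood--Offord / Berry--Esseen argument. A preliminary remark on notation: for the inequality to be nontrivial the symbol $X$ in the conjecture must denote the \emph{unnormalized} modulus $|\per(A)|$, exactly as in the quoted results of \cite{MR2483225} (with the density-section normalization $|\per(A)|/\sqrt{n!}$ one would take $h(n)=(n!)^{-1/4}\to 0$ and the claim would be false). So set $Y=|\per(A)|/\sqrt{n!}$. By Stirling, $\sqrt{n!}=(2\pi)^{1/4}\exp(-n/2)\,n^{n/2+1/4}\,(1+O(1/n))$, so the conjectured threshold $\exp(-n/2)n^{n/2+1/4}$ is precisely $(2\pi)^{-1/4}\sqrt{n!}$ up to $(1+o(1))$. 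Absorbing constants into the arbitrary vanishing factor, the conjecture becomes: for every $h(n)\to 0$, asymptotically almost surely $Y>h(n)$. This makes transparent the way the statement strengthens \cite{MR2483225}: it fixes the constant $c=1/2$ in $\exp(-cn)n^{n/2}$ and supplies the exact polynomial correction $n^{1/4}$, both of which are now seen to be nothing but the Stirling asymptotics of $\sqrt{n!}$.

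Second, I would show that the universally quantified statement ``for every $h(n)\to 0$, $\Pr(Y>h(n))\to 1$'' is equivalent to the single uniform anti-concentration statement
\[
  \lim_{\delta\to 0}\ \limsup_{n\to\infty}\ \Pr\!\left(Y\le\delta\right)=0 .
\]
One direction is immediate (given $\delta$, use that $h(n)<\delta$ eventually); the reverse is a routine diagonal argument that manufactures a single offending $h$ from any failure of uniformity. This reduction is attractive because the displayed statement is exactly the small-$x$ behaviour the simulations detect: the data give $F(x,n)\sim 1.33\,x$ with a bounded limiting density $f(x)\to 1.33$, so empirically $\Pr(Y\le\delta)\approx 1.33\,\delta\to 0$, which is what must now be established uniformly in $n$.

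Third comes the analytic core. I would expand the permanent along the last row, $\per(A)=\sum_{j=1}^{n}\varepsilon_j P_j$, where $\varepsilon_j=a_{n,j}\in\{\pm1\}$ are i.i.d.\ Rademacher signs and $P_j=\per(A^{(n,j)})$ is the permanent of the minor obtained by deleting row $n$ and column $j$. Conditioning on the first $n-1$ rows freezes the $P_j$, so $\per(A)$ is a signed sum of fixed reals, and a Berry--Esseen bound for weighted Rademacher sums gives
\[
  \Pr_{\varepsilon}\!\left(|\per(A)|\le t\right)\ \lesssim\
  \frac{t}{\sqrt{\sum_j P_j^2}}\ +\ \frac{\sum_j |P_j|^{3}}{\bigl(\sum_j P_j^2\bigr)^{3/2}} .
\]
To turn this into the needed bound with $t=\delta\sqrt{n!}$ I require two facts about the minors, holding asymptotically almost surely over the first $n-1$ rows: a lower bound $\sum_j P_j^2\ge c\,n!$, and a negligibility bound $\max_j P_j^2=o\bigl(\sum_j P_j^2\bigr)$ (which controls the second term via $\sum_j|P_j|^3\le(\max_j|P_j|)\sum_j P_j^2$). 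The first should follow from the second-moment method, since $\mean{\sum_j P_j^2}=\mean{|\per(A)|^2}=n!$ exactly, once the variance of $\sum_j P_j^2$ is shown to be $o((n!)^2)$; the second should follow from applying the upper bound of \cite{MR2483225} to each minor, $|P_j|\le (n-1)^{(n-1)/2+\epsilon}$, together with a union bound over $j$. On the resulting good event the display reads $\Pr_\varepsilon(Y\le\delta)\lesssim \delta+o(1)$; averaging over the minors gives $\Pr(Y\le\delta)\le C\delta+o(1)$, and taking $\limsup_n$ and then $\delta\to0$ closes the argument through the equivalence of the previous step.

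The main obstacle is the joint control of the minor permanents in the third step. The variance estimate for $\sum_j P_j^2$ requires bounding the fourth-moment correlations $\mean{P_i^2 P_j^2}$ between distinct minors, which amount to counting pairs of permutations with prescribed overlap and are combinatorially delicate; and the ``no dominant minor'' condition $\max_j P_j^2=o(\sum_j P_j^2)$ is itself an anti-concentration statement one size down, so a naive induction risks circularity. Breaking this --- either by a self-contained second- and fourth-moment computation for the minors, or by a martingale form of the row expansion that never conditions on a single distinguished row --- is precisely the difficulty that has kept the permanent anti-concentration conjecture open, and it is where the substantive work of a proof would lie.
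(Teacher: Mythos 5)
There is nothing in the paper to compare your argument against: the statement is presented as a \emph{conjecture}, proposed as a strengthening of the Tao--Vu conjecture from \cite{MR2483225} and supported only by the simulation data of the Bernoulli section (the empirical finding $F(x)\sim 1.33\,x$, i.e.\ a bounded limiting density at the origin). The paper offers no proof, and the problem is open. Your preliminary observations are sound and worth keeping: $X$ in the conjecture must indeed be the unnormalized $|\per(A)|$ (consistent with the preceding quotation of \cite{MR2483225}, not with the $X=|\per(A)|/\sqrt{n!}$ used elsewhere in that section), the threshold $e^{-n/2}n^{n/2+1/4}$ is just $(2\pi)^{-1/4}\sqrt{n!}$ by Stirling, and the reduction to $\lim_{\delta\to 0}\limsup_n\Pr(Y\le\delta)=0$ is correct and is exactly the content the numerics probe.

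The analytic core, however, has two steps that fail as stated, beyond the difficulties you already flag. First, the second-moment method cannot deliver ``$\sum_j P_j^2\ge c\,n!$ a.a.s.'': the variance of $\sum_j P_j^2$ is \emph{not} $o((n!)^2)$. Already the diagonal contribution is $n\var(P_1^2)=n\bigl(\mean{P_1^4}-((n-1)!)^2\bigr)$, and since the fourth moment of a Bernoulli permanent grows linearly (the paper's data give $\mean{X^4}\approx 1.25n$; for Gaussian entries it is exactly $n+1$), this is already of order $(n!)^2$. Chebyshev then yields at best a constant-probability lower bound (Paley--Zygmund), not an event of probability $1-o(1)$, so the ``good event'' on which your Berry--Esseen bound applies is not asymptotically almost sure. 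Second, the Tao--Vu upper bound is exponentially too weak for the no-dominant-minor condition: $(n-1)^{(n-1)/2+\epsilon}$ exceeds $\sqrt{(n-1)!}$ by a factor of roughly $e^{(n-1)/2}$, so it permits $\max_j P_j^2$ to be as large as $e^{n}\,n!$ up to polynomial factors, whereas you need it to be $o(n!)$; even a fourth-moment Markov bound with a union over the $n$ minors only gives $\max_j P_j^2\lesssim n^{1+\epsilon}\,n!$, which points the wrong way. Both obstructions are symptoms of the same underlying fact --- the permanent's moments grow too fast for low-moment methods --- and this, together with the circularity you correctly identify (the minor control is itself an anti-concentration statement one size down), is why the conjecture remains open. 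Your sketch is a legitimate program, but neither of its two supporting lemmas is within reach by the tools you cite.
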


%-----------------------------------------------------------------------------

\section{Gaussian behaviour of minors of unitary matrices}
The computational hardness of Boson sampling as analysed in
\cite{aaronson:13} depends on the fact that certain submatrices of
Haar-random unitary matrices asymptotically behave like random
matrices with Gaussian entries.  Next we will investigate how close
the permanent of such a submatrix is to the permanent of a random
Gaussian matrix.

Let $\mathcal{U}(n)$ be the family of unitary $n\times n$-matrices.
We can generate random members of $\mathcal{U}(n)$ under the
Haar-measure in the following way~\cite{mezzadri:07}; produce a
complex random Gaussian matrix $A$ as above, find its QR-decomposition
with $R=(r_{i,j})$, let $\Lambda=(\lambda_{i,j})$ be the diagonal
matrix of normalized elements of $R$ so that
$\lambda_{i,j}=\delta_{i,j} r_{i,j}/|r_{i,j}|$ (where $\delta_{i,j}$
is the Kronecker delta), then $U=Q \Lambda$ is a random unitary matrix
from $\mathcal{U}$.
%An even simpler method is to run the Gram-Schmidt process on the
%matrix $A$ but this is numerically less stable.

Now let $\mathcal{S}(m,n)$ be a family of random matrices obtained by
first generating a random unitary $m\times m$-matrix $U\in
\mathcal{U}(m)$, next let $U_n$ be the top-left $n\times n$-submatrix
of $U$ and set $S=\sqrt{m} U_n$. Then $S$ is a random matrix from the
family $\mathcal{S}(m,n)$.  

It is known that if $S\in\mathcal{S}(n^6,n)$ and 
%$U\in\mathcal{U}(n)$
$A\in\mathcal{G}(n)$ then the variation distance between them is
expected to be small, i.e., they have essentially the same probability
distribution.  The authors of Ref.~\cite{aaronson:13} prove a slightly
stronger result but they also think $n^6$ can be replaced by something
much smaller, say closer to $n^{2+\epsilon}$. We will use
distributions of the permanent to see if we can throw some light on
the problem.

We will use the Kolmogorov-Smirnov (KS) statistic $D =
\sup_x|F(x)-G(x)|$ as a measure of the distance between two empirical
distribution functions $F(x)$ and $G(x)$. For a two-sample test, we
reject the null hypothesis that they are the same (at significance
level $\alpha$) if $D>D_{\alpha}$ for certain $D_{\alpha}$. For
$\alpha=0.05$ and using $10^5$ samples for both distributions we get
$D_{\alpha} = 0.00607$.  We then first generate $S\in\mathcal{S}(n^a,
n)$ and compute $|\per(S)|/\sqrt{n!}$ for $10^5$ different $S$ and
then compare this distribution to that of $|\per(A)|/\sqrt{n!}$ for
$10^5$ complex Gaussian matrices $A$. We will see if the distance $D$
has an increasing or decreasing trend for different values of
$a$. Note that when $n^a$ is not an integer we just round to the
nearest integer. We use Mathematica's built-in routine for computing
the test statistic when comparing two distributions in a
Kolmogorov-Smirnov test as the value of $D$.

We have run this test for $5$ different $a$ and a wide range of $n$
for each $a$: $1\le n\le 34$ for $a=2$, $1\le n\le 32$ for $a=2.25$,
$1\le n\le 30$ for $a=2.50$, $1\le n \le 22$ for $a=2.75$ and $1\le
n\le 17$ for $a=3$. In Fig.~\ref{fig:ksvn} we show the KS-statistic
$D$ versus $n\ge 3$ for the different $a$.

%see also Fig.~\ref{fig:ksvn-zoom} for a zoomed-in plot on $a\ge 2.25$.

For $a=2$ the values of $D$ are clearly increasing at first but there
is no clear trend beginning at $n\approx 28$, with $D$ staying at
roughly $0.08$. For $a=2.25$ there is a very weak increasing trend in
$D$. Excluding individual points from the line fit is not enough to
get a decreasing trend though. For $a=2.5$ there is a distinctly
decreasing trend but it would take $n\approx 75$ to pass a KS-test at
the $5$\% level. For $a=2.75$ the distributions actually pass a
KS-test for $n=19$, $21$, $22$ and for $a=3$ they pass it for $n=12$
and $13$.

Reading the trends in the KS-statistic $D$, it would thus appear that
matrices from $\mathcal{S}(n^a, n)$ are essentially indistinguishable
from complex Gaussian $n\times n$-matrices, in terms of their
permanents, when $a > 2.25$, while for $a<2.25$ they are not, and the
case of $a=2.25$ appears to be a separator between the two cases which
we cannot classify.  If this is a correct classification, rather than
an effect of slow convergence in terms of $n$ for the lower values of
$a$, then it would contradict the conjecture in
Ref.~\cite{aaronson:13} that $a\ge 2+\epsilon$ is enough.

It is possible that the curve for $a=2$ will begin to decrease for
larger $n$, in accordance with the conjecture from
Ref.~\cite{aaronson:13}. Nontheless, for $a<2.25$, we see a behaviour
which is distinct from the Gaussian case for the range of $n$ used
here. So, care must be taken in the analysis of Boson sampling
experiments where an effective value of $a$ close to, or equal to, 2
has been chosen if the number of Bosons is small.

\begin{figure}
  \includegraphics[width=3.4in]{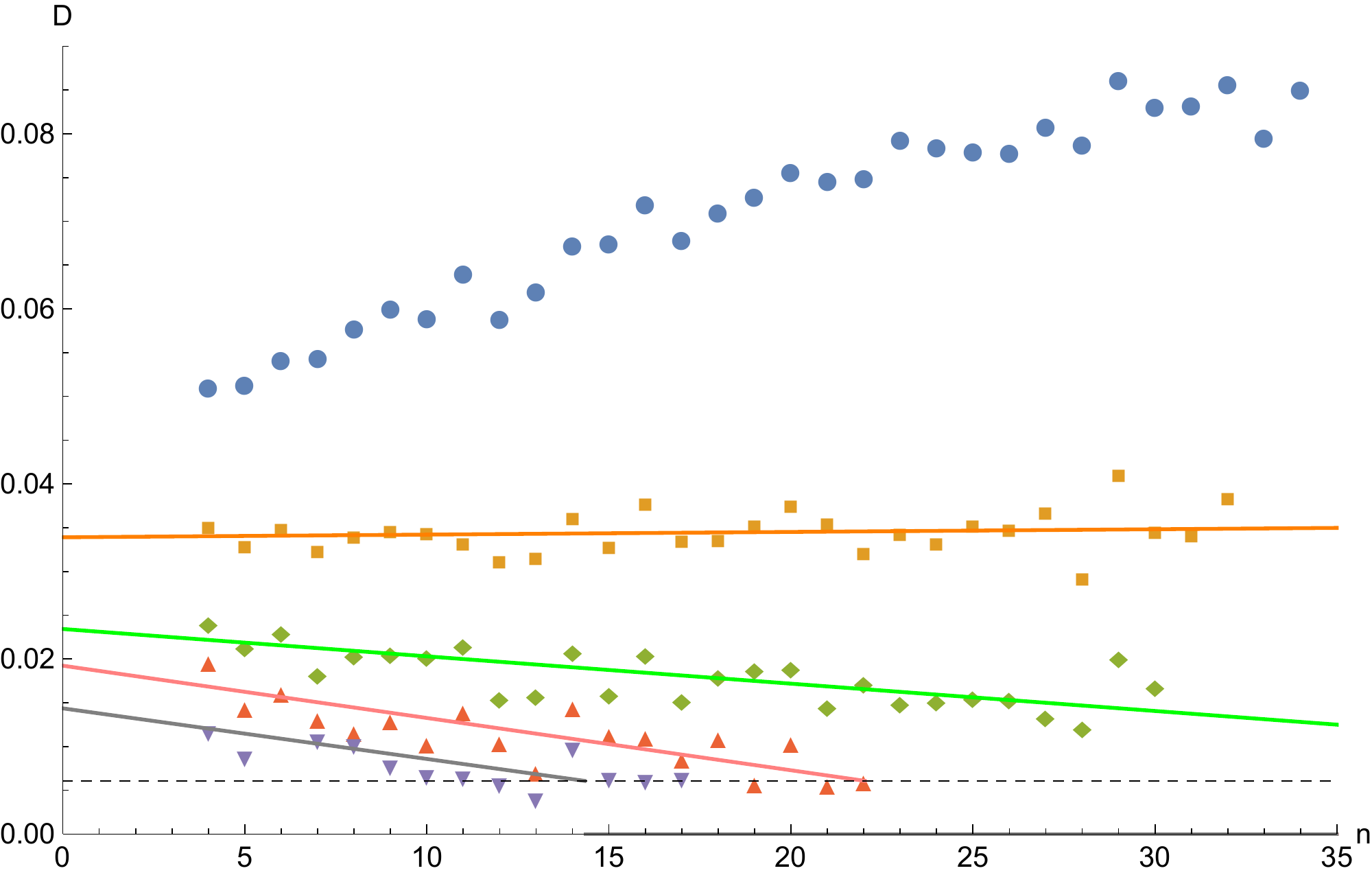}
  \caption{\label{fig:ksvn} KS-statistic $D$ plotted versus $n$ for
    the five different $a$. Downwards in figure are data sets for
    $a=2.00$ (blue points), $2.25$ (orange squares), $2.50$ (green
    diamonds), $2.75$ (pink up-triangles) and $3.00$ (purple
    down-triangles). Both distributions are based on $10^5$ samples so
    $D_{0.05}=0.00607$ (dashed line). The fitted lines are
    $0.034+0.000031x$ for $a=2.25$, $0.023-0.00031x$ for $a=2.50$,
    $0.019-0.00060x$ for $a=2.75$ and $0.014-0.00058x$ for $a=3$.}
\end{figure}

%-----------------------------------------------------------------------------

\section{Conclusions}

In order to compute output probabilities for Boson sampling
experiments \cite{aaronson:13} one has to compute the permanent of the
associated unitary matrices. Here we have presented a software package
for doing such computations efficiently, both on serial and parallel
machines. Our programs are efficient enough to allow us to beat the
previous world record for computation of permanents in a substantial
way, despite the fact that the previous record was set on a far larger
cluster \cite{Tian}.

Our package also has specialised functions for matrices of limited
bandwidth, running in time $\mathcal{O}(2^k n^2)$ for matrices of
bandwidth $k$, and in linear time for fixed $x$.  This makes it
possible to classically simulate a Boson sampling system of depth
$\mathcal{O}(\log{n})$ in polynomial time

We have used our software package to perform a large scale simulation
study of the anti-concentration conjecture for permanents
\cite{aaronson:13}. Here we find that the conjecture agrees well with
the conjecture, both for complex Gaussian matrices and other matrix
classes.  We also investigated how well the permanent of a minor of
size $n^a$ of an $n\times n$ Haar-random unitary matrix can be
approximated by the permanent of a random Gaussian matrix.  Here we
find some possible tension with the most optimistic version of a
conjecture from Ref.~\cite{aaronson:13}.

%-----------------------------------------------------------------------------

\begin{acknowledgments}
  The simulations were performed on resources provided by the Swedish
  National Infrastructure for Computing (SNIC) at High Performance
  Computing Center North (HPC2N).  The second author was supported by
  The Swedish Research Council grant 2014--4897.
\end{acknowledgments}

\appendix

%-----------------------------------------------------------------------------

\section{A parallel version of Ryser's algorithm}\label{alg}

We here collect the algorithms necessary for computing the permanent of general matrices 
on a parallel computer. Fortran and Mathematica implementations can be
freely downloaded and used from our website~\cite{perm-code}.

Distributing the computation equally on a number of nodes is of course
easily done.  The computation is a sum with $2^{n-1}$ terms and we
only need to split the sum into equal parts (or as equal as is
possible).  To distribute a sequence $(r_0, r_1, \ldots, r_{n-1})$ as
evenly as possible into $m$ subsequences $(s_0, s_1, \ldots,
s_{m-1})$, where $s_i = (r_k, r_{k+1}, \ldots, r_{k+\ell-1})$, we need
to find $k$ and $\ell$ for each $i=0,\ldots,m-1$:
\begin{itemize}
  \setlength\itemsep{0ex}
\item distribute$(n, m, i, k, \ell)$
\item In: $n\ge 0$, $m\ge 1$, $0\le i < m$
\item Out: $k$, $\ell$
\item[1] $q := \lfloor n/m \rfloor$
\item[2] $r := n \bmod m$
\item[3] $k := i\cdot q + \min(i,r)$
\item[3] $\ell := q + \mathcal{I}(i<r)$
\end{itemize}
Here $\mathcal{I}(s)$ is an indicator function returning $1$ if
statement $s$ is true and $0$ if $s$ is false.

The sequence in question is of course the Gray-code sequence.  To find
the first code in a subsequence we need to unrank the Gray-code,
i.e.~compute the $k$th code $x=(x_1,x_2,\ldots)$ in the Gray-code
sequence. It is common practice that this is obtained as
\begin{equation}
  x := \mathrm{xor}(k, \mathrm{rshift}(k))
\end{equation}
where $\mathrm{xor}$ is the bit-wise exclusive-or function of two
integers and $\mathrm{rshift}$ denotes a bit-wise shift of an integer
one step to the right.

Computing the next Gray-code in the sequence is also common knowledge,
see e.g.~Ref.~\cite{wilf:78}, but we include it here for completeness:
\begin{itemize}
  \setlength\itemsep{0ex}
\item nextset$(t, j, x)$
\item In: $t=\pm 1$ and binary vector $x$.
\item Out: integer $j$ and updated $t$ and $x$.
\item[1] $j := 1$  (first position of $x$)
\item[2] $t := -t$
\item[2] if $t = 1$ then
\item[3] \quad while $x_j = 0$ do
\item[4] \qquad $j := j+1$
\item[5] \quad end do
\item[6] \quad $j = j + 1$
\item[6] end if
\item[7] $x_j:=1-x_j$
\end{itemize}

The permanent of a $n\times n$-matrix is a sum of $2^{n-1}$ terms
which we want to distribute over, say, $m$ nodes (or threads).  Each
node then computes the partial sum $S_k$ for $k=0,1,\ldots,m-1$. For
details, see Ref.~\cite{wilf:78}.

\begin{itemize}
  \setlength\itemsep{0ex}
\item subpermanent$(A, n, m, k, S)$
\item In: $n\times n$-matrix $A=(a_{ij})$, integers $0\le k < m$.
\item Out: partial sum $S$.
\item[1] $\mathrm{distribute}(2^{n-1}, m, k, r, \ell)$
\item[2] $x := \mathrm{xor}(r, \mathrm{rshift}(r))$ (where $x=(x_1,x_2,\ldots,x_n)$)
\item[3] $t := (-1)^r$
\item[4] $S := 0$
\item[5] for $i=1,2,\ldots,n$ let $w_i := a_{i,n} - \tfrac{1}{2}\sum_{j=1}^n a_{i,j}$
\item[6] for $j=1,2,\ldots,n-1$ where $x_j=1$ do
\item[7] \quad for $i=1,2,\ldots,n$ let $w_i := w_i + a_{i,j}$
\item[8] end do
\item[9] do $\ell$ times
\item[10] \quad $p := \prod_{i=1}^n w_i$
\item[11] \quad $\mathrm{nextset}(t,j,x)$
\item[12] \quad $S:=S + t\cdot p$
\item[13] \quad $z := 2x_j-1$
\item[14] \quad for $i=1,2,\ldots,n$ let $w_i := w_i + z\cdot a_{i,j}$
\item[15] end do
\end{itemize}

Collecting and adding up the partial sums is easy.
\begin{itemize}
  \setlength\itemsep{0ex}
\item permanent$(A,n,m,S)$
\item In: $n\times n$-matrix $A$ and integer $m\ge 1$.
\item Out: permanent $S$.
\item[1] for $k=0,1,\ldots m-1$ do
\item[2] \quad $\mathrm{subpermanent}(A,n,m,k,S_k)$
\item[3] end do
\item[4] $S := 2\,(-1)^n\,\sum_{k=0}^{m-1} S_k$
\end{itemize}

For completeness we also describe Kahan summation~\cite{kahan:65}.
Consider the following standard summation loop computing
$S:=a_1+\cdots+a_n$,
\begin{itemize}
  \setlength\itemsep{0ex}
  \item[1] $S := 0$
  \item[2] for $i=1,2,\ldots,n$ do
  \item[3] \quad $S := S + a_i$
  \item[4] end do
\end{itemize}
In Kahan summation we do instead the following
\begin{itemize}
  \setlength\itemsep{0ex}
\item[1] $S := 0$
\item[2] $b := 0$
\item[3] for $i=1,2,\ldots,n$ do
\item[4] \quad $c := a_i - b$
\item[5] \quad $t := S + c$
\item[6] \quad $b := (t - S) - c$
\item[7] \quad $S := t$
\item[8] end do
\end{itemize}

%-----------------------------------------------------------------------------

\section{Algorithm for Sparse Matrices}\label{alg2}
The main observation behind our improvement for sparse matrices on
Ryser's method comes from the observation that in a sparse matrix many
of the products in Ryser's formula \eqref{ryser} will be zero. By
avoiding sets $J$ which are guaranteed to lead to a zero sum in the
innermost product we may achieve a speed-up.

If we interpret the matrix $A$ as the adjacency matrix of an
edge-weighted graph $G$, where vertices may have loops, we find that a
set $J$ can only lead to a non-zero product if $J$ is a dominating set
in $G$, i.e., every vertex in $G$ has at least one neighbor in $J$.
Finding all dominating sets can be done in exponential time, with a
basis smaller than 2 \cite{Fomin}, but we can instead use a faster
approximate algorithm which still leads to a speed-up over the general
version of Ryser's formula.

A subset $S$ of the vertex set of $G$ is \emph{domination restricting}
if every dominating set of $G$ must contain a vertex from $S$.  The
full vertex set of $G$ is domination restricting, as is the
neighourhood of a single vertex.  We say that a list of sets
$L=(S_1,S_2,\ldots,S_t)$ is domination restricting if each set $S_i$
is domination restricting and the sets are pairwise disjoint.  We now
note that every dominating set $J$ in $G$ must have a non-empty
intersection with each set in $L$.  So, if we use all sets $J$ with
this property we will include all dominating sets $J$, and some sets
which may not be dominating, while excluding a potentially large
number of sets.  Below we give a randomized greedy algorithm for
constructing a useful list $L$.

We say that a matrix $A$ is $d$-sparse if every row and column
contains at most $d$ non-zero entries.  For this type of matrix a good
choice of $L$ will lead to an exponential speed-up over the basic
version of Ryser's formula. Let us now re-state and prove
the theorem in Sec.~\ref{sec:iassm}.

\begin{theorem}
	Let $A$ be a $d$-sparse $n \times n$ matrix. Then the permanent of
    $A$ can be computed in time
	\[\mathcal{O}(n2^n (1-2^{-d})^{n/d^2})\]
\end{theorem}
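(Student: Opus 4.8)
The plan is to exploit the structure set up just before the statement. In Ryser's formula~\eqref{ryser} the innermost product $\prod_i \sum_{j\in J} a_{i,j}$ vanishes unless $J$ meets the support $R_i=\{j : a_{i,j}\neq 0\}$ of every row $i$, so it suffices to sum over those $J$ that intersect all of the sets $R_i$. Since $A$ is $d$-sparse we have $|R_i|\le d$ for every $i$, and each $R_i$ is a domination-restricting set in the sense defined above. The first step is therefore to extract from the rows a large family of \emph{pairwise disjoint} supports, which will play the role of the list $L$.

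First I would build $L$ greedily. Pick any row $i_1$ and place $R_{i_1}$ into $L$; then discard every row $i$ whose support meets $R_{i_1}$, and repeat on the rows that remain. The key combinatorial estimate is that each pick discards few rows: a row $i$ is discarded only if $R_i\cap R_{i_1}\neq\emptyset$, i.e.\ $a_{i,j}\neq 0$ for some $j\in R_{i_1}$; as $A$ is $d$-sparse, each of the at most $d$ columns $j\in R_{i_1}$ has at most $d$ nonzero entries, so at most $d^2$ rows are removed at each step. Hence the greedy process runs for at least $n/d^2$ steps and produces disjoint supports $R_{i_1},\ldots,R_{i_t}$ with $t\ge n/d^2$, each of size at most $d$.

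The counting step is then immediate. Because the chosen supports are disjoint subsets of $N$, the number of $J\subseteq N$ meeting all of them factorizes: the columns outside $\bigcup_k R_{i_k}$ are unconstrained, while within each block $R_{i_k}$ we may take any of the $2^{|R_{i_k}|}-1$ nonempty subsets. This gives
\[
 2^{\,n-\sum_k |R_{i_k}|}\prod_{k=1}^{t}\bigl(2^{|R_{i_k}|}-1\bigr)
 = 2^n\prod_{k=1}^{t}\bigl(1-2^{-|R_{i_k}|}\bigr)
 \le 2^n\,(1-2^{-d})^{t}
 \le 2^n\,(1-2^{-d})^{n/d^2},
\]
using $|R_{i_k}|\le d$ for the first inequality and $t\ge n/d^2$ for the second (the direction of each is fixed by $0<1-2^{-d}<1$). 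Every $J$ contributing a nonzero term---in particular every dominating set---is among those counted, so restricting Ryser's sum to this family leaves the permanent unchanged.

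Finally I would argue that the per-term cost is $O(n)$, so that the total running time is the displayed bound. This is the delicate part, and I expect it to be the main obstacle: one must enumerate exactly the admissible $J$ while spending only $O(n)$ per admissible set and \emph{not} paying for the skipped ones. The natural device is a product (mixed-radix) Gray code over the blocks $R_{i_1},\ldots,R_{i_t}$ together with the free columns, cycling each block through its nonempty subsets and the free part through all subsets; consecutive admissible sets then differ in a single column, so the weight vector $w_i$ of the subpermanent routine is updated in $O(n)$ and the product $\prod_i w_i$ recomputed in $O(n)$. Verifying that such a traversal visits precisely the admissible sets with single-coordinate steps, and never stalls on forbidden ones, is where the real work lies; the count above then converts directly into the stated bound $\mathcal{O}\!\left(n2^n(1-2^{-d})^{n/d^2}\right)$.
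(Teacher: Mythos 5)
Your proposal is correct and follows essentially the same route as the paper: both arguments extract $n/d^2$ pairwise disjoint domination-restricting sets of size at most $d$ (the paper via a colour class of the square graph $G^2$, you via a greedy selection of disjoint row supports --- the same combinatorial fact) and then count the admissible $J$ to get $2^n(1-2^{-d})^{n/d^2}$. Your closing worry about achieving $O(n)$ per term is legitimate but resolvable by the mixed-radix Gray code you describe (skipping the empty subset in each block costs only $O(1)$ extra updates per step); the paper's own proof does not address this point at all.
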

\begin{proof}
  Let $G^2$ be the square of the graph $G$ associated with $A$,
  i.e. the graph where two vertices are adjacent if they are at
  distance at most 2 in $G$. The graph $G^2$ has degree at most $d^2$,
  so if $n>d^2$ we can properly colour the vertices of $G^2$ using
  $d^2$ colours.

  Now we can construct a domination restricting list $L$ by taking a
  colour class of size at least $n/d^2$ from $G^2$ and for every
  vertex in that colour class including its neighbourhood as a set in
  $L$.  This gives us a list $L$ with $n/d^2$ sets, each of size $d$.

  We will now use all sets $J$ constructed by taking a non-empty
  subset of each set $S_i$ in $L$ and an arbitrary subset of the
  vertices not in $L$.  The number of such sets is
  $(2^d-1)^{n/d^2}2^{n-n/d}$.
\end{proof} 
The degree bounds in the theorem are exact for graphs which do not
contain short cycles and when such cycles are present we will
typically see a larger speed-up.  For non-symmetric $A$ we may also
gain more by instead taking $G$ to be a directed graph, where a
dominating set now means that each vertex has an out-neighbour in the
set.

\begin{itemize}
  \setlength\itemsep{0ex}
\item sparsepermanent$(A, D, p)$
\item In: sparse $n\times n$-matrix $A=(a_{ij})$ without $0$-rows or
  $0$-columns and greedy partition $D$ of $\{1,\ldots,n\}$ (see
  below).
\item Out: permanent $p$.
\item[1] Assume $D:=\{S_1,\ldots,S_d, T\}$
\item[2] $p:=0$
\item[3] for all $s_{\ell}\subseteq S_\ell$, $s_\ell\neq\emptyset$, $\ell=1,\ldots,d$
\item[4] \quad for all $t\subseteq T$
\item[5] \qquad $J=s_1\cup\ldots\cup s_d\cup t$
\item[6] \qquad $p:=p + (-1)^{|J|}\prod_{i=1}^n\sum_{j\in J} a_{i,j}$
\item[7] \quad end do
\item[8] end do
\item[9] $p:=p\cdot (-1)^n$
\end{itemize}

\begin{itemize}
  \setlength\itemsep{0ex}
\item greedypartition$(A,D)$
\item In: sparse $n\times n$-matrix $A=(a_{ij})$ without $0$-rows or $0$-columns.
\item Out: partition $D=\{S_1,\ldots, S_d, T\}$ of $\{1,\ldots,n\}$.
\item[ 1] for $i,j=1,\ldots,n$ let $b_{i,j}:=1$ if $a_{i,j}\neq 0$, otherwise $b_{i,j}=0$.
\item[ 2] for $i=1,\ldots,n$ let $\delta_i=\sum_{j=1}^n b_{i,j}$ (out-degree of $i$)
\item[ 3] let $d:=0$ and $V:=\{1,\ldots,\}$
\item[ 4] while $V\neq\emptyset$ do
\item[ 5] \quad $k:=\arg\min\{\delta_i: i\in V\}$ ($k$ has min degree)
\item[ 6] \quad $N_k:=\{\ell: b_{k,\ell}=1\}$ (neighbours of $k$)
\item[ 7] \quad $d:=d+1$
\item[ 8] \quad $S_d:=N_k$
\item[ 9] \quad $V:=V\setminus\{k\}$
\item[10] \quad for all $\ell\in V$ where $N_k\cap N_\ell\neq\emptyset$ do
\item[11] \qquad $V:=V\setminus \{\ell\}$
\item[12] \quad end do
\item[13] end do
\item[14] $T:=\{1,\ldots,n\}\setminus (S_1\cup\ldots\cup S_d)$
\item[15] $D:=\{S_1,\ldots, S_d, T\}$
\end{itemize}

Note that it is often beneficial to choose the minimum element of step
(5) at random.  Then run the partition algorithm several times and
pick the result which minimises the number
\begin{equation}
  2^{|T|} \prod_{\ell=1}^d (2^{|S_\ell|} - 1),
\end{equation}
which is the total number of sets enumerated in the sparsepermanent
algorithm above.

%-----------------------------------------------------------------------------

\section{Algorithm for  Matrices of limited Bandwidth }\label{alg3}
Here we describe our algorithm for computing the permanent of matrices
with bounded bandwidth.

\begin{itemize}
  \setlength\itemsep{0ex}
\item bandpermanent$(A, k, p)$
\item In: $n\times n$-matrix $A=(a_{ij})$,  bandwidth $0\le k \leq n$.
\item Out: permanent $p$.
\item[1] $D:=\mathrm{diag}(x_1, x_2, \ldots, x_n)$ (where the $x_i$
  are formal variables)
\item[2] $C:= (AD)\bar{1}_n$
\item[3] $p:=1$
\item[4] for $i=1,2,\ldots,n$ do
\item[5] \quad $p := p\cdot C_i$
\item[6] \quad In $p$, set $x_{i - k - 1} = 1$ and  set $x_j^2 =0$ for all $j$ 
\item[7] end do
\item[8] In $p$, set $x_i =1$ for all $i$ 
\end{itemize}
Note that step 2 should be done with matrix sparsity in mind to avoid
a quadratic overhead computational cost.

%\bibliographystyle{apsrev}
%\bibliography{refs}

\end{document}